\newtheorem{property}{Property}
\newtheorem{definition}{Definition}
\newtheorem{proposition}{Proposition}
\newtheorem{corollary}{Corollary}
\newtheorem{lemma}{Lemma}
\providecommand{\R}{\ensuremath{{\mathds R}}}
\providecommand{\rgives}{\ensuremath{\stackrel{*}{\Longrightarrow}}}
\providecommand{\Grammarclass}{\ensuremath{\mathcal{G}}}
\providecommand{\PCFGclass}{\textup{\textbf{PCFG}(\ensuremath{\Sigma})}}
\providecommand{\Nrat}{\ensuremath{{\mathds N}}}
\providecommand{\Lone}{\textup{L1}}
\providecommand{\Ltwo}{\textup{L2}}
\providecommand{\Linf}{\textup{L}\ensuremath{_{\infty}}}
\providecommand{\CFG}{\ensuremath{\textsc{Cfg}}}
\providecommand{\PCFG}{\ensuremath{\textsc{Pcfg}}}
\providecommand{\A}{\ensuremath{{\cal{A}}}}
\providecommand{\AC}{\ensuremath{\textsc{Ac}}}
\providecommand{\TAC}{\ensuremath{\textsc{Tac}}}
\providecommand{\es}{\ensuremath{\lambda}}
\providecommand{\PCP}{\ensuremath{\textsc{Pcp}}}
\providecommand{\PFA}{\ensuremath{\textsc{Pfa}}}
\providecommand{\HMM}{\ensuremath{\textsc{Hmm}}}
\providecommand{\Naming}{\ensuremath{\mathbb L}}
\providecommand{\Prob}{\ensuremath{Pr}}
\providecommand{\len}{\textup{len}}
\providecommand{\Dis}{\ensuremath{{\cal{D}}}}
\providecommand{\Sigmastar}{\ensuremath{\mathop{\Sigma^{\star}}}}
\providecommand{\COEM}{\ensuremath{\textsc{CoEm}}}
\providecommand{\gives}{\ensuremath{\mathop{\rightarrow}}}
\title{On the Computation of Distances for Probabilistic Context-Free Grammars}
\author{Colin de la Higuera\qquad\qquad James Scicluna
\institute{LINA, UMR 6241\\
University of Nantes\thanks{The first and second author acknowledge partial support by the R\'{e}gion des Pays de la Loire.}\\
France}
\email{cdlh@univ-nantes.fr\qquad james.scicluna@univ-nantes.fr}
\and
Mark-Jan Nederhof
\institute{School of Computer Science \\
University of St Andrews \\ UK}
}
\begin{document}
\maketitle

\begin{abstract}
Probabilistic context-free grammars ($\PCFG$s) are used to define distributions over strings, and are powerful modelling tools in a number of areas, including natural language processing, software engineering, model checking, bio-informatics, and pattern recognition. A common important question is that of comparing the distributions generated or modelled by these grammars: this is done through checking language equivalence and computing distances. Two $\PCFG$s are language equivalent if every string has identical probability with both grammars. This also means that the distance (whichever norm is used) is null. It is known that the language equivalence problem is interreducible with that  of multiple ambiguity for context-free grammars, a long-standing open question.
In this work, we prove that computing distances corresponds to solving undecidable questions: this is the case for the $\Lone$, $\Ltwo$ norm, the variation distance and the Kullback-Leibler divergence. Two more results are less negative: 1. The most probable string can be computed, and, 2. The Chebyshev distance (where the distance between two distributions is the maximum difference of probabilities over all strings) is interreducible with the language equivalence problem.
\end{abstract}
\section{General motivation and introduction}

Probabilistic context-free grammars ($\PCFG$s) are powerful modelling tools in a number of areas, including natural language processing, software engineering, model checking, bio-informatics, and pattern recognition. In natural language processing, these grammars are used as language models \cite{jurafsky95,benedi05} or for parsing natural language \cite{johnson98,klein03}. In model checking the crucial questions of program equivalence or meeting specifications will often be solved through tackling the grammar equivalence problem \cite{espa04,fore12}.

In pattern recognition, probabilistic context-free grammars have been proposed and used for 40 years \cite{fubo75}.
In bio-informatics structural dependencies are modelled through context-free rules, whose probabilities can be estimated \cite{saka94,salv02}.

In all these areas, the following questions are important: given two grammars, are they equivalent? Two grammars are equivalent (\emph{strongly}, or \emph{language equivalent}) when every string has identical probability in each distribution. More generally, a distance between distributions expresses how close they are, with a zero distance coinciding with equivalence.

Furthermore, in many areas, these probabilistic models are to be learnt. When learning, comparison between states or nonterminals often determines if a merge or a generalization is to take place.  Key grammatical inference operations \cite{higu10} will depend on the precision with which an equivalence is tested or a distance is measured.

In the case of probabilistic finite automata, these questions have been analysed with care. The initial study by Balasubramanian \cite{bala93} showed that the equivalence problem for hidden Markov models admitted a polynomial algorithm. Later, a number of authors showed that the Euclidian distance could be computed in polynomial time \cite{lyng99}.  This important result made use of the key concept of co-emission. A similar result was obtained for $\PFA$s \cite{murg04}. 

Negative results were also obtained: The $\Lone$ distance, and the variation distance were proved to be intractable \cite{lyng01}.
Relative entropy (or Kullback-Leibler divergence) cannot be computed between general $\PFA$s. But it can be computed for deterministic \cite{carr97} and unambiguous \cite{cort08} $\PFA$s.

A related problem, that of computing the most probable string (also called the \emph{consensus string}) was proved to be an NP-hard problem \cite{casa00a,lyng02};  heuristics and parameterized algorithms have been proposed \cite{higu13b}. Some of these results and techniques were also extended to $\PCFG$s \cite{higu13a}.

The co-emission of a $\PCFG$ and a hidden Markov model was discussed by \cite{jago01}, who formulated the problem in terms of finding the solution to a system of quadratic equations; for the difficulties in solving such systems, see also \cite{etessami2009}.
By a related construction, a probabilistic finite automaton can be obtained from a given unambiguous (non-probabilistic) finite automaton and a given $\PCFG$, in such a way that the Kullback-Leibler divergence between the two probability distributions is minimal \cite{nede04}.

The same problems have been far less studied for $\PCFG$s:  the equivalence problem has recently been proved \cite{fore14} to be as hard as the multiple ambiguity problem for context-free grammars: do two context-free grammars have the same number of derivation trees for each string?
Since on the other hand it is known \cite{abne99} that probabilistic pushdown automata are equivalent to  $\PCFG$s, it follows that the equivalence problem is also interreducible with the multiple ambiguity problem. Before that, the difficulty of co-emission and of related results was shown in \cite{jago01}.

Since computing distances is at least as hard as checking equivalence (the case where the distance is 0 giving us the answer to the equivalence question) it remains to be seen, for $\PCFG$s, just how hard it is to compute a distance, and also to see if all distances are as hard.

This is the subject of this work. We have studied a number of distances over distributions, including the $\Lone$ and $\Ltwo$ norms, the Hellinger and the variation distances and the Kullback-Leibler divergence. We report that none of these can be computed.

On the other hand, the fact that the consensus string can be computed allows to show that  the Chebyshev distance (or $\Linf$ norm) belongs to the same class as the multiple ambiguity problem for context-free grammars and the equivalence problem for $\PCFG$s.

In Section~\ref{sec:def} we remind the reader of the different notations, definitions and key results we will be needing. In Section~\ref{sec:res} we go through the new results we have proved. We conclude in Section~\ref{sec:con}.

\section{Definitions and notations}\label{sec:def}
Let $[n]$ denote the set $\{1,\ldots,n\}$ for each $n\in\Nrat$. Logarithms will be taken in base 2.
An \emph{alphabet} $\Sigma$ is a finite non-empty set of symbols.
A \textit{string} $w$ over $\Sigma$ is a finite sequence $w = a_1 \ldots a_n$
of symbols. Symbols  will be indicated by $a, b, c,\ldots$,
and  strings by $u, v,\ldots, z$.
Let $|w|$ denote the length of $w$.
The \emph{empty string} is denoted by $\es$.

We denote by $\Sigmastar$ the set of all strings and by $\Sigma^{\le n}$
the set of those of length at most $n$. Similarly, 
$\Sigma^{n}=\{x\in\Sigmastar:\,|x|=n\}$,
$\Sigma^{<n}=\{x\in\Sigmastar:\,|x|<n\}$ and 
$\Sigma^{\ge n}=\{x\in\Sigmastar:\,|x|\ge n\}$.

A \emph{probabilistic language} $\Dis$
is a probability distribution over $\Sigmastar$.
The probability of a string $x \in \Sigmastar$ under the
distribution $\Dis$ is denoted as
$\Prob_{\Dis}(x)$\label{index:prob}
and must satisfy $\sum_{x \in \Sigmastar} \Prob_{\Dis}(x) = 1$.
If $A$ is a language (thus a set of strings, included in $\Sigmastar$),
and $\Dis$ a distribution over $\Sigmastar$,
$\Prob_{\Dis}(A)=\sum_{x\in A} \Prob_{\Dis}(x)$.
%Let $\Qrat$ be the set of the rational numbers.

If the distribution is modelled by some grammar $G$,
the probability of $x$ according to the probability distribution
defined by $G$ is denoted by
$\Prob_{G}(x)$. The distribution modelled by a grammar $G$ will
be denoted by ${\Dis}_{G}$.

\subsection{Context-free grammars}

A \emph{context-free grammar} is a tuple $<\Sigma, V, R,
S>$ where $\Sigma$ is a finite alphabet (of terminal symbols), $V$
is a finite alphabet (of variables or non-terminals), $R\subset V
\times (\Sigma \cup V)^*$ is a finite set of production rules, and
$S$ $(\in V)$ is the axiom or start symbol.

We will write $N\rightarrow \beta$ for rule $(N, \beta)\in R$. If
$\alpha,\beta,\gamma \in(\Sigma\cup V)^*$ and  $(N, \beta)\in R$ we have
$\alpha N\gamma\Rightarrow\alpha\beta\gamma$. This means that string $\alpha N\gamma$
\emph{derives} (in one step) into string $\alpha\beta\gamma$.

$\rgives$
is the reflexive and transitive closure of $\Rightarrow$. If there
exists $\alpha_0,\dots,\alpha_k$ such that $\alpha_0 \Rightarrow \dots \Rightarrow \alpha_k$ we will
write $\alpha_0 \stackrel{k}{\Rightarrow} \alpha_k$. $\Naming(G)$ is the language generated by $G$: the set of all strings $w$ over $\Sigma$ such that $S\rgives w$.

A sequence $\alpha_0 \Rightarrow \dots \Rightarrow \alpha_k$ is a derivation of $\alpha_k$ from $\alpha_0$. A derivation step $\alpha N\gamma\Rightarrow\alpha\beta\gamma$ is a \emph{left-most derivation step} if $\alpha\in \Sigma^*$. A derivation is \emph{left-most} if each step is left-most.

A context-free grammar is \emph{proper} if it satisfies the following three properties:
\begin{enumerate}
\item It is cycle-free, i.e. no non-terminal $A$ exists such that $A\stackrel{+}{\Longrightarrow}A$.
\item It is $\es$-free, i.e. either no rules with $\es$ on the RHS exist or exactly one exists with $S$ on the LHS (i.e. $S \rightarrow \es$) and $S$ does not appear on the RHS of any other rule.
\item It contains no useless symbols or non-terminals. This means that every symbol and non-terminal should be reachable from $S$ and every non-terminal should derive at least one string from $\Sigma^*$.
\end{enumerate}

A context-free grammar is ambiguous if there exists a string $w$ admitting two different left-most derivations from $S$ to $w$. Given any string $w$, we can define the multiplicity $m_G(w)$ as the number of different left-most derivations from $S$ to $w$. If $\forall w\in\Sigmastar\, m_G(w)\le 1$, $G$ is unambiguous. Otherwise it is ambiguous. If $\forall w\in\Sigmastar\, m_G(w)< \infty$, $G$ is a \emph{finite multiplicity grammar}. If a grammar is proper it has finite multiplicity. Two finite multiplicity grammars $G_1$ and $G_2$ are multiplicity equivalent if $\forall w\in\Sigmastar\, m_{G_1}(w)=m_{G_2}(w)$.

The multiplicity equivalence problem has been studied for many years \cite{kuic86}: the problem has been proved to be decidable only for particular classes of grammars.

Results regarding the decidability of context-free grammars can be found in many textbooks \cite{harr78}:
\begin{enumerate}
\item  Given two context-free grammars $G_1$ and $G_2$, the (\emph{equivalence}) question $\Naming(G_1)=\Naming(G_2)$? is undecidable.
\item  Given two context-free grammars $G_1$ and $G_2$, the (\emph{inclusion}) question $\Naming(G_1)\subseteq \Naming(G_2)$? is undecidable.
\item  Given two context-free grammars $G_1$ and $G_2$, the (\emph{emptiness of intersection}) question $\Naming(G_1)\cap \Naming(G_2)=\emptyset$? is undecidable.
%\item Given a context-free grammar $G$, the problem $\Naming(G)=\Sigma^*$?  is undecidable.
\end{enumerate}

\subsection{Probabilistic context-free grammars}

\begin{definition}
  A \emph{probabilistic context-free grammar ($\PCFG$)} $G$
  is a context-free grammar $<\Sigma,V, R, S>$  with a probability function 
  $P:R\to\R^+$.
\end{definition}

$\Prob_{G}(x)$ is the sum of all the leftmost derivations’ probabilities of $x$, where the probability of a leftmost derivation is the product of the rule probabilities used in the derivation. A $\PCFG$ $G$ is said to be consistent if $\sum_{x\in \Sigma^*} \Prob_{G}(x) = 1$. Unless otherwise specified, any $\PCFG$ mentioned from now onwards is assumed to be consistent. 

Parsing with a $\PCFG$ is usually done by adapting the Earley or the
\textsc{Cky} algorithms \cite{jelinek92}. By straightforward variants allowing every terminal to match every input position, one can compute $\Prob_G(\Sigma^n)$, still in polynomial time in $n$. By summing $\Prob_G(\Sigma^i)$ for $i<n$ one obtains 
$\Prob_G(\Sigma^{<n})$, and $\Prob_G(\Sigma^{\geq n})$ is $1-\Prob_G(\Sigma^{<n})$. Alternatively, $\Prob_G(\Sigma^{\geq n})$ can be computed directly by variants of algorithms computing prefix probabilities \cite{jelinek91,stol95}.

We denote by $\Naming(G)$ the support language of $G$, ie the set of strings of non null probability.
The class of all $\PCFG$s over alphabet $\Sigma$ will be denoted by $\PCFGclass$.

There exists an effective procedure which, given a proper $\CFG$ $G$, builds a $\PCFG$ $G'$ such that $\forall x\in\Sigma^*, \Prob_{G'}(x)>0\iff x\in \Naming(G)$. We call this procedure \textsc{Mp} for Make Probabilistic.

One possible procedure for \textsc{Mp} is to first assign uniform probabilities to the given $\CFG$, thus obtaining a possibly inconsistent $\PCFG$ which then can be converted into a consistent $\PCFG$ using the procedure explained in \cite{gecs10}.%\annotcolin{James: make sure that no doubt is left about this}

Let us formally define the (language) equivalence problem:
\begin{definition}
Two probabilistic grammars $G_1$ and $G_2$ are \emph{(language) equivalent} if $\forall x\in\Sigma^*,\Prob_{G_1}(x)=\Prob_{G_2}(x)$.
We denote by $\langle EQ,\PCFGclass\rangle$ the decision problem: are two $\PCFG$s $G_1$ and $G_2$ equivalent?
\end{definition}
The following result holds for  probabilistic pushdown automata, which are shown in \cite{abne99} to be equivalent to $\PCFG$s.
\begin{proposition}\cite{fore14}
The $\langle EQ,\PCFGclass\rangle$ problem is interreducible with the multiplicity equivalence problem for $\CFG$s.
\end{proposition}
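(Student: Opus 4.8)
\emph{Proof plan.}
I would establish the two reductions separately; since both rest on one device I describe it first. Recall that a finite-multiplicity grammar is cycle-free and, after deletion of useless symbols, may be taken proper; as a preprocessing step I would convert every input grammar to \emph{Chomsky normal form} by a transformation that preserves the number of leftmost derivations of each string (respectively the probability of each string, in the probabilistic case). The virtue of this normal form is that every leftmost derivation of a string $w$ of length $n$ uses exactly $2n-1$ rule applications, a number depending only on $|w|$. The main obstacle of the whole argument lives here: the textbook conversion removes $\es$-rules and unit rules, and doing so naively \emph{merges} or \emph{splits} derivations and hence alters $m_G$; one must check, using cycle-freeness and properness, that a multiplicity-preserving variant exists. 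Granting this, I freely assume both inputs are proper and in Chomsky normal form.

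For the direction ``multiplicity equivalence $\le\langle EQ,\PCFGclass\rangle$'', take two such $\CFG$s $G_1,G_2$ and first glue each under a fresh start symbol $S'$ and a fresh terminal $\#$, with rules $S'\to T\,S_i$, $S'\to T\,T$ and $T\to\#$; this prepends $\#$ to every old string (so $\#w$ inherits the multiplicity of $w$) and adds the single string $\#\#$, whose multiplicity is exactly $1$ in both glued grammars $G'_1,G'_2$. Now assign to every rule the same probability $p$, so that a leftmost derivation of $w$ has weight $p^{2|w|-1}$ and $W_i(w)=m_{G'_i}(w)\,p^{2|w|-1}$. Choosing $p$ small enough that $\sum_n p^{2n-1}D_n<\infty$, where $D_n$ bounds the total number of derivations of strings of length $n$, guarantees finite total mass $Z_i=\sum_w W_i(w)$, so the renormalization of \cite{gecs10} turns each into a \emph{consistent} $\PCFG$ $\hat G_i$ with $\Prob_{\hat G_i}(w)=W_i(w)/Z_i$. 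Thus $\hat G_1\equiv\hat G_2$ iff $m_{G'_1}(w)=(Z_1/Z_2)\,m_{G'_2}(w)$ for all $w$; evaluating at $w=\#\#$ forces the spurious scalar $Z_1/Z_2$ to be $1$, and the chain closes to $m_{G_1}\equiv m_{G_2}$.

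For the converse ``$\langle EQ,\PCFGclass\rangle \le$ multiplicity equivalence'', take consistent $\PCFG$s $G_1,G_2$ with rational rule probabilities and let $N$ be a common denominator, so each rule $r$ carries probability $c_r/N$ with $c_r\in\Nrat$. I would build $\CFG$s $H_1,H_2$ by replacing each rule by $c_r$ parallel variants through fresh intermediate nonterminals (for a terminal rule $A\to a$: rules $A\to H_{r,j}$ and $H_{r,j}\to a$, $j\in[c_r]$; for a binary rule $A\to BC$: rules $A\to K_{r,j}\,C$ and $K_{r,j}\to B$, $j\in[c_r]$). Each logical application of $r$ now unfolds into $c_r$ distinct leftmost derivations, and since in Chomsky normal form the numbers of binary and of terminal applications used for a length-$n$ string are fixed ($n-1$ and $n$), every derivation of $w$ in $H_i$ still has length depending only on $|w|$. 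A direct count gives $m_{H_i}(w)=\sum_{d}\prod_{r\in d}c_r=N^{2|w|-1}\,\Prob_{G_i}(w)$, where $d$ ranges over the derivations of $w$ in $G_i$. As the factor $N^{2|w|-1}$ is positive and identical for both grammars, $m_{H_1}\equiv m_{H_2}\iff\Prob_{G_1}\equiv\Prob_{G_2}$.

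Finally I would record the shorter but less self-contained route suggested by the surrounding text: \cite{fore14} proves the analogue for probabilistic pushdown automata, and \cite{abne99} shows these are distribution-equivalent to $\PCFG$s, so composing the two transformations yields the statement at once. In either route I expect the genuinely delicate points to be the multiplicity-preserving normal-form conversion and, for the first reduction, verifying that renormalization divides every string weight by the \emph{same} constant $Z_i$, which is exactly what lets the $\#\#$ anchor eliminate the scalar factor.
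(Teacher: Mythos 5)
Your closing paragraph is, in fact, the paper's entire proof: the proposition is stated there as a citation, imported from \cite{fore14} (which establishes the interreduction for probabilistic pushdown automata) combined with the equivalence between those devices and $\PCFG$s from \cite{abne99}. So, as far as matching the paper goes, that one paragraph suffices. Your direct two-reduction construction is a genuinely different, self-contained route, and its architecture is the right one: a length-uniform normal form plus uniform rule probabilities to turn multiplicities into probabilities, an anchor string to cancel normalization constants, and parallel rule copies to turn rational probabilities into multiplicities. However, it contains two genuine gaps, both of the same kind: irrational numbers arise exactly where the reduction must write down a finite object to hand to the oracle.

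In the direction ``multiplicity equivalence $\le_T\langle EQ,\PCFGclass\rangle$'', the delicate point is not the one you flag. That renormalization divides every string weight by a single constant $Z_i$ is indeed what the standard construction gives: each rule $A\to BC$ of weight $p$ is reassigned probability $p\,Z_B Z_C/Z_A$, and these factors telescope along any derivation, leaving $W(w)/Z_{S}$. The real problem is that the partition functions $Z_A$ are least solutions of quadratic systems and are irrational in general, so the ``consistent $\PCFG$ $\hat G_i$'' has irrational rule probabilities and cannot be finitely encoded as input to the $\langle EQ,\PCFGclass\rangle$ oracle; the Turing reduction is not effective as described. Nor can renormalization be dodged by the natural alternative of adding abort rules $A\to\dagger$ (a fresh terminal) carrying the leftover mass to force consistency: strings containing $\dagger$ then get probabilities reflecting each grammar's internal structure, and two multiplicity-equivalent grammars can disagree on them (one puts mass on $a\dagger$, the other on $\dagger c$, say). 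Making the uniform-probability grammars consistent without losing effectiveness is precisely where the work in \cite{fore14} lies.

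In the converse direction, your preprocessing assumes a probability-preserving conversion to Chomsky normal form. Terminal separation, binarization, and even unit-rule elimination are unproblematic (the last stays rational, being a matrix inversion), but $\es$-rule elimination requires the quantities $\Prob(A\rgives\es)$, which are least solutions of polynomial systems and are irrational in general: for $A\to AA$, $A\to\es$, $A\to a$ each with probability $1/3$, one gets $\Prob(A\rgives\es)=(3-\sqrt 5)/2$. Once the normal-form grammar has irrational rule probabilities there is no common denominator $N$, and the identity $m_{H_i}(w)=N^{2|w|-1}\Prob_{G_i}(w)$ has no meaning. As written, this direction covers only $\es$-free $\PCFG$s, whereas $\langle EQ,\PCFGclass\rangle$ as defined in the paper allows arbitrary consistent ones. (By contrast, the obstacle you do flag, multiplicity-preserving normal-forming of finite-multiplicity $\CFG$s, can be handled by duplicating nonterminals and is not the essential difficulty.)
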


\subsection{About co-emissions} 

Co-emission has been identified as a key concept allowing, in the case of hidden Markov models or probabilistic finite-state automata, computation in polynomial time of the distance for the $\Ltwo$ norm (and more generally any Lp norm, for even values of p): the distance can be computed as a finite sum of co-emissions.
\begin{definition}
The coemission of two probabilistic grammars $G_1$ and $G_2$ is the probability that both grammars $G_1$ and $G_2$ simultaneously emit the same string:\\
$\COEM(G_1,G_2)=\sum_{x\in\Sigma^*}\Prob_{G_1}(x)\cdot\Prob_{G_2}(x)$
\end{definition}
A particular case of interest is the probability of twice generating the same string when using the same grammar:
Given a $\PCFG$ $G$, the \emph{auto-coemission} of $G$, denoted as $\AC(G)$, is $\COEM(G,G)$. 
If the grammars are ambiguous, internal factorization is required in the computation of co-emission. In order to detect this we introduce the \emph{tree-auto-coemission} as the probability that the grammar produces exactly the same left-most derivation (which corresponds to a specific tree):
\begin{definition}
Given a $\PCFG$ $G$, the \emph{tree-auto-coemission} of $G$ is the probability that $G$ generates the exact same left-most derivation twice. We denote it by $\TAC(G)$.
\end{definition}
Note that the tree-auto-coemission and the auto-coemission coincide if and only if $G$ is unambiguous:
\begin{proposition}\label{ambiguity}
Let $G$ be a $\PCFG$.\\
$\AC(G)\ge\TAC(G)$.\\
$\AC(G)=\TAC(G)\iff $ $G$ is unambiguous.
\end{proposition}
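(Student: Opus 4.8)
The plan is to reindex both quantities by leftmost derivations rather than by strings, so that they can be compared term by term. Write $D(x)$ for the set of leftmost derivations of $x$ from $S$ and, for a derivation $d$, write $p(d)$ for the product of the probabilities of the rules it uses; by definition $\Prob_G(x)=\sum_{d\in D(x)}p(d)$. Squaring and summing over all strings,
\[
\AC(G)=\COEM(G,G)=\sum_{x\in\Sigmastar}\Prob_G(x)^2=\sum_{x\in\Sigmastar}\ \sum_{d,d'\in D(x)}p(d)\,p(d').
\]
For $\TAC(G)$ I would first note that drawing a derivation from $G$ assigns each $d$ the weight $p(d)$, and that consistency of $G$ guarantees $\sum_d p(d)=1$; hence the probability of independently drawing the same derivation twice is $\sum_d p(d)^2$, which, grouped by the string each derivation yields, equals $\sum_{x\in\Sigmastar}\sum_{d\in D(x)}p(d)^2$.

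Subtracting, the diagonal contributions $p(d)^2$ cancel and only the off-diagonal pairs survive:
\[
\AC(G)-\TAC(G)=\sum_{x\in\Sigmastar}\ \sum_{\substack{d,d'\in D(x)\\ d\neq d'}}p(d)\,p(d').
\]
Every summand is a product of nonnegative reals, so the whole expression is nonnegative, which gives the inequality $\AC(G)\ge\TAC(G)$ at once. The rearrangement into diagonal and off-diagonal parts is legitimate because all terms are nonnegative and the series converge, the bound $\AC(G)\le\sum_{x}\Prob_G(x)=1$ (and likewise $\TAC(G)\le 1$) being immediate.

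For the equality characterization I would treat the two directions separately. If $G$ is unambiguous then each $D(x)$ has at most one element, the off-diagonal sum is empty, and equality holds. Conversely I would argue the contrapositive: if $G$ is ambiguous there is a string $x$ with two distinct leftmost derivations $d\neq d'$, i.e. $m_G(x)\ge 2$, and the corresponding term $p(d)\,p(d')$ then appears in the off-diagonal sum. The step I expect to need the most care is establishing that this contribution is \emph{strictly} positive: here one uses that $P$ takes values in $\R^+$, so that every rule probability, and hence $p(d)$ and $p(d')$, is strictly positive, ruling out the degenerate possibility of ambiguity witnessed only by derivations of probability zero. Granting this, the off-diagonal sum is strictly positive and $\AC(G)>\TAC(G)$, which completes the equivalence.
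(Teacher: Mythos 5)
Your proof is correct. Note that the paper itself states this proposition without any proof (it appears as a remark following the definition of tree-auto-coemission), so there is no authorial argument to compare against; your diagonal/off-diagonal decomposition of $\AC(G)-\TAC(G)=\sum_{x}\sum_{d\neq d'\in D(x)}p(d)\,p(d')$ is exactly the standard justification the authors implicitly rely on, including the identification $\TAC(G)=\sum_d p(d)^2$ and the use of nonnegativity to rearrange the convergent series. The one point of genuine substance you flag --- that the converse direction needs every rule probability to be strictly positive, because the paper defines ambiguity purely syntactically (two distinct left-most derivations, regardless of their weight) while $P:R\to\R^+$ rules out zero-probability witnesses --- is a real subtlety, and handling it explicitly makes your write-up more careful than the paper's silent treatment.
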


\subsection{Distances between distributions}
A $\PCFG$ defines a distribution over $\Sigmastar$. If two grammars can be compared syntactically, they can also be compared semantically: do they define identical distributions, and, if not, how different are these distributions?
\begin{definition}
The $\Lone$ distance (or Manhattan distance) between two probabilistic grammars $G_1$ and $G_2$ is:
\begin{equation*}
d_{\Lone}(G_1,G_2)=\sum_{x\in\Sigma^*}|\Prob_{G_1}(x)-\Prob_{G_2}(x)|
\end{equation*}
\end{definition}
\begin{definition}
The $\Ltwo$ distance (or Euclidian distance) between two probabilistic grammars $G_1$ and $G_2$ is:
\begin{equation*}
d_{\Ltwo}(G_1,G_2)=\sqrt{\sum_{x\in\Sigma^*}(\Prob_{G_1}(x)-\Prob_{G_2}(x))^2}
\end{equation*}
\end{definition}
$\Ltwo$ distance can be rewritten in terms of coemission, as:
\begin{equation*}\label{eq:L2}
d_{\Ltwo}(G_1,G_2)=\sqrt{\COEM(G_1,G_1)-2\COEM(G_1,G_2)+\COEM(G_2,G_2)}
\end{equation*}

\begin{definition}
The $\Linf$ distance (or Chebyshev distance) between two probabilistic grammars $G_1$ and $G_2$ is:
\begin{equation*}
d_{\Linf}(G_1,G_2)=\max_{x\in\Sigma^*}|\Prob_{G_1}(x)-\Prob_{G_2}(x)|
\end{equation*}
\end{definition}
Note that the $\Linf$ distance seems closely linked with the consensus string, which is the most probable string in a language.
\begin{definition}
The \emph{variation distance} between two probabilistic grammars $G_1$ and $G_2$ is:
\begin{equation*}
d_{V}(G_1,G_2)=\max_{X\subset\Sigma^*}\sum_{x\in X}(\Prob_{G_1}(x)-\Prob_{G_2}(x))
\end{equation*}
\end{definition}
The variation distance looks like $d_{\Linf}$, but is actually connected with $d_{\Lone}$:

\begin{equation}\label{equation:variation}
d_{V}(G_1,G_2)=\frac{1}{2}d_{\Lone}(G_1,G_2)
\end{equation}
A number of distances have been studied elsewhere (for example \cite{jago01,cort07}):
\begin{definition}
The \emph{Hellinger distance} between two probabilistic grammars $G_1$ and $G_2$ is:
\begin{equation*}
d_{H}(G_1,G_2)=\frac{1}{2}\cdot\sum_{w\in\Sigma^*}\bigg(\sqrt\Prob_{G_1}(w)-\sqrt\Prob_{G_2}(w)\bigg)^2
\end{equation*}

The \emph{Jensen-Shannon (JS)} distance between two probabilistic grammars $G_1$ and $G_2$ is:
\begin{equation*}
d_{JS}(G_1,G_2)=\sum_{x \in \Sigma^*}\bigg(\Prob_{G_1}(x)\log\frac{2\Prob_{G_1}(x)}{\Prob_{G_1}(x)+\Prob_{G_2}(x)} + \Prob_{G_2}(x)\log\frac{2\Prob_{G_2}(x)}{\Prob_{G_1}(x)+\Prob_{G_2}(x)}\bigg)
\end{equation*}

The \emph{chi-squared ($\chi^2$)} distance between two probabilistic grammars $G_1$ and $G_2$ is:
\begin{equation*}
d_{\chi^2}(G_1,G_2) = \sum_{x \in \Sigma^*}\frac{(\Prob_{G_1}(x)-\Prob_{G_2}(x))^2}{\Prob_{G_2}(x)}
\end{equation*}
\end{definition}

The Kullback-Leibler divergence, or relative entropy is not a metric:
\begin{definition}
The KL divergence between two probabilistic grammars $G_1$ and $G_2$ is:
\begin{equation}
d_{KL}(G_1,G_2)=\sum_{x\in\Sigma^*}\Prob_{G_1}(x)\big(\log\Prob_{G_1}(x) - \log\Prob_{G_2}(x)\big)
\end{equation}
\end{definition}
Even if the KL-divergence does not respect the triangular inequality, $d_{KL}(G_1,G_2)=0\iff G_1\equiv G_2$.
\begin{definition}
Let $G$ be a $\PCFG$. The consensus string for $G$ is the most probable string of $\Dis_G$.
We denote by $\langle CS,\PCFGclass\rangle$ the decision problem: is $w$ the most probable string given $G$?
\end{definition}

\subsection{PCP and the probabilistic grammars}
\begin{definition}
For each distance  $d_X$ the problem $\langle d_X,\Grammarclass\rangle$ is the decision problem: given two grammars $G$ and $G'$ from $\Grammarclass$, and any rational $k$, do we have $d_X(G,G')\le k$?
\end{definition}
We  use \emph{Turing reduction} between decision problems and write:
$$
\Pi_1\le_T\Pi_2
$$
for problem $\Pi_1$ reduces to problem $\Pi_2$: if there exists a terminating algorithm solving $\Pi_2$ there also is one solving $\Pi_1$. If simultaneously $\Pi_1\le_T\Pi_2$ and $\Pi_1\le_T\Pi_2$, we will say that $\Pi_1$ and $\Pi_2$ are interreducible. The construction can be used for non decision problems: if only $\Pi_1$ is a decision problem, $\Pi_1$ is undecidable, and $\Pi_1\le_T\Pi_2$,  we will say that $\Pi_2$ is \emph{uncomputable}.

One particular well-known undecidable problem can be used as starting point for the reductions: the Post Correspondence Problem \cite{post46}, which is undecidable:

\noindent\textbf{Name:} $\PCP$  \\
\textbf{Instance:} A finite set $F$ of pairs of strings $(u_i,v_i), 1\le i\le n$ over an alphabet $\Sigma$.\\
\textbf{Question:} Is there a finite sequence of integers $x_1\dots x_t,\; t>0 $ such that $u_{x_1}u_{x_2}\dots u_{x_t}=v_{x_1}v_{x_2}\dots v_{x_t}$?
\vspace{0.1in}

We give two standard constructions starting from an instance $F$ of $\PCP$. In both cases we use another alphabet containing one symbol $\#_i$ for each $i:\, 1\le i\le n$. Let $\Omega$ denote this alphabet.

\vspace{0.1in}
\noindent\textbf{Construction 1:} Two grammars  \\
An instance of $\PCP$ as above is transformed into two $\PCFG$s $G_1$ and $G_2$ as follows:\\
Rules of $G_1$: $S_1\gives u_iS_1\#_i$ and $S_1\gives u_i\#_i$ \\
Rules of $G_2$: $S_2\gives v_iS_2\#_i$ and $S_2\gives v_i\#_i$\\
Each rule has  probability $\frac{1}{2n}$.

\vspace{0.1in}
\noindent\textbf{Construction 2:} One grammar  \\
An instance of $\PCP$ is first transformed into two $\PCFG$s $G_1$ and $G_2$ as above.
Then a new non-terminal $S_0$ is introduced and we add the new rules $S_0\gives S_1$ and $S_0\gives S_2$, each with probability $\frac{1}{2}$.

\vspace{0.1in}
The language obtained through $G_0$, $G_1$ and $G_2$ contains only strings $x$ which can always be decomposed into $x=yz$ with $y\in\Sigma^*$ and $z\in\Omega^*$. We note that the number of  derivation steps to generate string $x$ is $1+|z|$ for $G_1$ and $G_2$. For a final string $x$ we denote this number by $\len(x)$. For example $\len(aabababa\#_3\#_1\#_4\#_1)=4$.

Note that a positive instance of $\PCP$ will lead to $G_1$ and $G_2$ with a non empty intersection, and to an ambiguous $G_0$.

The following properties hold:

\begin{property}~
\begin{itemize}
\item $G_1$ and $G_2$ are unambiguous and deterministic.
\item If $x\in \Naming(G_1)$, $\Prob_{G_1}(x)=(\frac{1}{2n})^{\len(x)}$
\item $F$ is a positive instance of $\PCP$ if and only if $\COEM(G_1,G_2)>0$
\item $F$ is a positive instance of $\PCP$ if and only if $G_0$ is ambiguous. In terms of probabilities, if there exists a string $x$ such that $\Prob_{G_0}(x)=(\frac{1}{2n})^{\len(x)}$
\end{itemize}
\end{property}

\begin{property}\label{tac_for_pcp}
Let $F$ be an instance of $\PCP$ with $n$ pairs. Then
$$
\TAC(G_0)=\frac{1}{16n-4}
$$
\end{property}
\begin{eqnarray*}
\TAC(G_0)&=&\sum_{i\ge 1}\bigg(n^i\cdot\frac{1}{4}\cdot(\frac{1}{2n})^{2i}\bigg)\\
&=&\frac{n}{16n^2}\cdot\sum_{i\ge 0}(\frac{1}{4n})^i\\
&=&\frac{1}{16n}\cdot\frac{1}{1-\frac{1}{4n}}\\
&=&\frac{1}{16n}\cdot\frac{4n}{4n-1}\\
&=&\frac{1}{16n-4}
\end{eqnarray*}

\section{Some decidability results}\label{sec:res}
We report results concerning the problems related to equivalence and decision computation. 
\subsection{With the Manhattan distance}
\begin{proposition}
One cannot compute, given two $\PCFG$s $G_1$ and $G_2$,
$$
d_{\Lone}(G_1,G_2)=\sum_{x\in\Sigma^*}|\Prob_{G_1}(x)-\Prob_{G_2}(x)|.
$$
\end{proposition}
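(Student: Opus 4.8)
The plan is to reduce $\PCP$ to the problem of computing $d_{\Lone}$, using Construction~1. Given an instance $F$ of $\PCP$ with $n$ pairs, I would build the two grammars $G_1$ and $G_2$ as in that construction; by the stated Property both are consistent, unambiguous and deterministic, each string $x$ in their support receiving probability $(\frac{1}{2n})^{\len(x)}$, and $F$ is a positive instance if and only if $\Naming(G_1)\cap\Naming(G_2)\neq\emptyset$ (equivalently $\COEM(G_1,G_2)>0$).

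The central step is to evaluate $d_{\Lone}(G_1,G_2)$ explicitly. I would split the sum over $\Sigmastar$ according to the three regions $\Naming(G_1)\setminus\Naming(G_2)$, $\Naming(G_2)\setminus\Naming(G_1)$ and $\Naming(G_1)\cap\Naming(G_2)$, noting that all other strings contribute $0$ since both probabilities are then null. The decisive observation is that on the intersection the two probabilities coincide: for $x\in\Naming(G_1)\cap\Naming(G_2)$ the value $\len(x)$ depends only on $x$, so $\Prob_{G_1}(x)=(\frac{1}{2n})^{\len(x)}=\Prob_{G_2}(x)$ and the term $|\Prob_{G_1}(x)-\Prob_{G_2}(x)|$ vanishes. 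Hence the first region contributes $\Prob_{G_1}(\Naming(G_1)\setminus\Naming(G_2))$ and the second $\Prob_{G_2}(\Naming(G_2)\setminus\Naming(G_1))$. Using consistency ($\sum_x\Prob_{G_i}(x)=1$) to rewrite each of these as $1-\Prob_{G_i}(\Naming(G_1)\cap\Naming(G_2))$, and again the equality of the two distributions on the intersection, this yields
\[
d_{\Lone}(G_1,G_2)=2-2\,\Prob_{G_1}\big(\Naming(G_1)\cap\Naming(G_2)\big).
\]

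From this identity the reduction is immediate. When $F$ is a negative instance the intersection is empty, so $d_{\Lone}(G_1,G_2)=2$; when $F$ is positive the intersection carries strictly positive mass, so $d_{\Lone}(G_1,G_2)<2$. Therefore $d_{\Lone}(G_1,G_2)=2$ if and only if $F\notin\PCP$. Consequently any algorithm computing $d_{\Lone}(G_1,G_2)$ would let us decide $\PCP$ — test whether the returned value equals the rational $2$ — giving $\PCP\le_T\langle\text{compute }d_{\Lone}\rangle$; since $\PCP$ is undecidable, $d_{\Lone}$ is uncomputable.

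I expect the main obstacle to be the bookkeeping in the central step: one must argue rigorously that the two distributions agree on every string of the common support, so that the intersection contributes nothing to the $\Lone$ sum, and then invoke consistency to collapse the two remaining sums. A secondary point to handle with care is the final logical step: the discriminating value $2$ is a fixed rational attained exactly in the negative case, so testing equality with $2$ is what converts a distance computation into a decision procedure for $\PCP$. It is worth emphasising that in the positive case the value is only guaranteed to lie strictly below $2$ (and may even be irrational), which is all the reduction requires.
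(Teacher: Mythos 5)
Your proof is correct, but it takes a genuinely different route from the paper's. The paper's proof is a two-line reduction from the emptiness-of-intersection problem for $\CFG$s (item 3 of the undecidability list in Section~2): given arbitrary $\CFG$s $G$ and $G'$, it applies \textsc{Mp} to obtain $\PCFG$s with the same supports and tests whether $d_{\Lone}(\mbox{\textsc{Mp}}(G),\mbox{\textsc{Mp}}(G'))=2$; the underlying (unstated) fact is that the $\Lone$ distance between two consistent distributions equals $2$ if and only if their supports are disjoint, since every string in a common support contributes strictly less than the sum of its two probabilities. You instead reduce directly from $\PCP$ via Construction~1, in effect inlining the classical proof that $\CFG$ intersection-emptiness is undecidable; this avoids \textsc{Mp} (and the consistency-restoring procedure it relies on), and because the two grammars of Construction~1 assign the identical value $(\frac{1}{2n})^{\len(x)}$ to every string of the common support, you obtain the stronger exact identity $d_{\Lone}(G_1,G_2)=2-2\,\Prob_{G_1}(\Naming(G_1)\cap\Naming(G_2))$ rather than merely an inequality. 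The final step --- compare the returned value against the rational $2$ --- is identical in both proofs, and both share the same implicit convention that ``computing'' the distance yields a value that can be tested for equality with $2$. What the paper's route buys is brevity and reuse: the same one-liner extends immediately to the Hellinger and Jensen-Shannon distances, as the paper notes right afterwards. What your route buys is self-containment (no appeal to \textsc{Mp} or to the undecidability of $\CFG$ intersection-emptiness, only to $\PCP$ itself) and an explicit closed form for the distance, which makes the discriminating value $2$ completely transparent.
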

\begin{proof}
If this were possible, then we could easily solve the \emph{empty intersection problem} by building  \textsc{Mp}($G$) and \textsc{Mp}($G'$) and then checking if $d_{\Lone}(\mbox{\textsc{Mp}}(G),\mbox{\textsc{Mp}}(G'))=2$.
\end{proof}
\begin{corollary}
One cannot compute, given two $\PCFG$s $G_1$ and $G_2$, the variation distance between $G_1$ and $G_2$.
\end{corollary}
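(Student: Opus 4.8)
The plan is to reduce the computation of the variation distance to that of the $\Lone$ distance, which the preceding proposition has just shown to be impossible. The essential ingredient is the identity \eqref{equation:variation}, namely $d_V(G_1,G_2)=\frac{1}{2}d_{\Lone}(G_1,G_2)$, already recorded in the excerpt. Granting this identity, the argument is immediate: any terminating procedure computing $d_V(G_1,G_2)$ would, after multiplying its output by $2$, yield $d_{\Lone}(G_1,G_2)$ exactly, contradicting the proposition. Hence no such procedure can exist, and $d_V$ is uncomputable.

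Concretely, I would phrase this as a Turing reduction $\langle d_{\Lone},\PCFGclass\rangle\le_T\langle d_V,\PCFGclass\rangle$ (the reverse holds just as easily, so the two are interreducible), so that uncomputability transfers directly through transitivity and the proposition. Alternatively, one can bypass $d_{\Lone}$ altogether and replay the proof of the proposition itself: given two proper $\CFG$s $G$ and $G'$, build $\mbox{\textsc{Mp}}(G)$ and $\mbox{\textsc{Mp}}(G')$ and note that their supports are disjoint exactly when the two distributions have disjoint support, which by the identity means $d_V(\mbox{\textsc{Mp}}(G),\mbox{\textsc{Mp}}(G'))=1$. Being able to compute $d_V$ would then decide the emptiness-of-intersection problem for context-free grammars, which is undecidable.

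The only point requiring any care is the identity $d_V=\frac{1}{2}d_{\Lone}$ itself, which I would justify, if a self-contained argument were wanted, by observing that the supremum defining $d_V$ is attained at $X=\{x:\Prob_{G_1}(x)\ge\Prob_{G_2}(x)\}$, the set on which the signed difference $\Prob_{G_1}-\Prob_{G_2}$ is nonnegative; since both distributions have total mass $1$, the positive and negative parts of this difference have equal total, each equal to $\frac{1}{2}d_{\Lone}$. Beyond invoking this identity there is no real obstacle: the corollary is essentially a one-line consequence of the proposition, and I expect no genuinely hard step.
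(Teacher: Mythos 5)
Your proposal is correct and matches the paper's own argument: the corollary is obtained by a direct application of the identity $d_{V}(G_1,G_2)=\frac{1}{2}d_{\Lone}(G_1,G_2)$ (Equation~\ref{equation:variation}), transferring uncomputability from the $\Lone$ case. Your alternative route via disjoint supports and the justification of the identity are sound but unnecessary extras; the core reduction is exactly the paper's.
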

The above follows from a straightforward application of Equation~\ref{equation:variation}.

The same construction can be used for the Hellinger and Jenson-Shannon distances: whenever $\Naming(G_1)$ and $\Naming(G_2)$ have an empty intersection, it follows that $d_{H}(G_1,G_2)=1$ and $d_{JS}(G_1,G_2)=2$.

Summarizing, $\langle d_{\Lone},\PCFGclass\rangle$, $\langle d_{V},\PCFGclass\rangle$, $\langle d_{H},\PCFGclass\rangle$ and $\langle d_{JS},\PCFGclass\rangle$  are undecidable.

\subsection{With the Euclidian distance}
For $\PFA$, positive results were obtained in this case: the distance can be computed, both for $\PFA$ and $\HMM$ in polynomial time\cite{lyng99}.

In \cite{jago01}, Jagota \textit{et al.} gave the essential elements allowing a proof that co-emission, the $\Ltwo$ and the Hellinger distances are uncomputable. In order to be complete, we reconstruct similar results in this section.

\begin{proposition}
One cannot compute, given two {\PCFG}s $G_1$ and $G_2$,
$$
\COEM(G_1,G_2)=\sum_{x\in\Sigma^*}\Prob_{G_1}(x)\cdot\Prob_{G_2}(x).
$$
\end{proposition}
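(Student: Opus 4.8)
The plan is to reduce $\PCP$ directly to the computation of co-emission, exploiting Construction~1 together with the property already established, namely that $F$ is a positive instance of $\PCP$ if and only if $\COEM(G_1,G_2)>0$. This mirrors the argument used above for $d_{\Lone}$, which reduced the empty intersection problem; here the cleanest route is to phrase the reduction in terms of $\PCP$, since the relevant property is stated that way. Since $\PCP$ is undecidable and the map $F\mapsto(G_1,G_2)$ is effective, it suffices to argue that access to the value $\COEM(G_1,G_2)$ would let us decide whether that value is strictly positive.

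First I would take an arbitrary instance $F$ of $\PCP$ and apply Construction~1 to obtain the two consistent, unambiguous, deterministic grammars $G_1$ and $G_2$. Next I would observe that co-emission is a sum of nonnegative terms, so $\COEM(G_1,G_2)\ge 0$ always holds, and therefore $\COEM(G_1,G_2)=0$ if and only if $\COEM(G_1,G_2)\le 0$. The quoted property then identifies $\COEM(G_1,G_2)=0$ with $\Naming(G_1)\cap\Naming(G_2)=\emptyset$, i.e.\ with $F$ being a negative instance. Consequently, an algorithm that computes $\COEM(G_1,G_2)$ — equivalently, one deciding the comparison $\COEM(G_1,G_2)\le k$ at $k=0$ — would decide $\PCP$, contradicting its undecidability.

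The only delicate point, and the step I expect to be the real obstacle, is pinning down the sense of ``compute''. The argument must show that knowing the numerical value (or having access to the associated threshold decision) is enough to test positivity. This is precisely where nonnegativity of co-emission does the work: it collapses the test ``$\COEM(G_1,G_2)>0$'' into the single comparison ``$\COEM(G_1,G_2)\le 0$'', so the reduction never has to separate a positive co-emission from $0$ by an unknown gap, and in particular requires no lower bound on how small a positive co-emission can be. Once this is made precise, the reduction is immediate.
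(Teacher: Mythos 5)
Your proof is correct, but it reaches the conclusion by a different reduction than the paper's. The paper reduces the \emph{emptiness of intersection} problem for context-free grammars: given arbitrary proper \CFG{}s $G$ and $G'$, it applies the procedure \textsc{Mp} to obtain \PCFG{}s whose supports are exactly $\Naming(G)$ and $\Naming(G')$, and then checks whether $\COEM(\mbox{\textsc{Mp}}(G),\mbox{\textsc{Mp}}(G'))=0$, which holds iff the intersection is empty. You instead reduce $\PCP$ directly, using the concrete grammars of Construction~1 and the stated property that $F$ is a positive instance iff $\COEM(G_1,G_2)>0$. The two arguments share the same key insight --- nonnegativity of co-emission turns the zero-test into a disjointness test --- and in fact your proof essentially inlines the classical $\PCP$-based argument by which emptiness of intersection for \CFG{}s is proved undecidable in the first place. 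What your route buys: it is self-contained, avoiding reliance on the \textsc{Mp} procedure (and the consistency conversion it requires), since the Construction~1 grammars are consistent by design; moreover, your explicit observation that the single threshold comparison at $k=0$ suffices connects the argument cleanly to the paper's formal definition of $\langle d_X,\Grammarclass\rangle$ as a decision problem, a point the paper's one-line proof leaves implicit. What the paper's route buys: brevity, by invoking the known undecidability of empty intersection as a black box, and uniformity with its $\Lone$ proof, which uses the identical \textsc{Mp}-based construction.
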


\begin{proof} If this were possible, then we could easily solve the empty intersection problem, by taking $G$ and $G'$, building  \textsc{Mp}($G$) and \textsc{Mp}($G'$) and then checking if $\COEM(\mbox{\textsc{Mp}}(G),\mbox{\textsc{Mp}}(G'))=0$.
\end{proof}
\begin{proposition}\label{coemd2reduction}
Computing the auto-coemission $\AC$ is at least as difficult as computing the $\Ltwo$ distance.
\end{proposition}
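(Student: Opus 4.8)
The plan is to give a Turing reduction showing that an oracle for the auto-coemission $\AC$ suffices to compute the Euclidian distance $d_{\Ltwo}$, which is exactly the claim that computing $\AC$ is at least as hard as computing $d_{\Ltwo}$. I start from the co-emission form of the distance already recorded above,
\[
d_{\Ltwo}(G_1,G_2)=\sqrt{\COEM(G_1,G_1)-2\COEM(G_1,G_2)+\COEM(G_2,G_2)},
\]
and observe that the two outer terms are by definition auto-coemissions, $\COEM(G_i,G_i)=\AC(G_i)$, so each is obtained by a single oracle call. The only genuine difficulty is the cross term $\COEM(G_1,G_2)$, which on its own was shown to be uncomputable; the core of the reduction is to recover it from auto-coemissions alone.

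To do so I would build, after first renaming the non-terminals of $G_1$ and $G_2$ so that their variable sets are disjoint, a single mixture grammar $G$ exactly as in Construction~2: a fresh start symbol $S_0$ together with the two rules $S_0\gives S_1$ and $S_0\gives S_2$, each with probability $\frac{1}{2}$. Since $G_1$ and $G_2$ are consistent, $G$ is a consistent $\PCFG$ satisfying $\Prob_G(x)=\frac{1}{2}\Prob_{G_1}(x)+\frac{1}{2}\Prob_{G_2}(x)$ for every $x\in\Sigmastar$.

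Expanding the square in the definition of the auto-coemission of $G$ then yields
\[
\AC(G)=\sum_{x\in\Sigmastar}\Big(\tfrac{1}{2}\Prob_{G_1}(x)+\tfrac{1}{2}\Prob_{G_2}(x)\Big)^2=\tfrac{1}{4}\AC(G_1)+\tfrac{1}{2}\COEM(G_1,G_2)+\tfrac{1}{4}\AC(G_2),
\]
so the cross term is isolated as $\COEM(G_1,G_2)=2\AC(G)-\tfrac{1}{2}\AC(G_1)-\tfrac{1}{2}\AC(G_2)$. Substituting this back into the distance formula collapses everything to three auto-coemission values,
\[
d_{\Ltwo}(G_1,G_2)=\sqrt{2\AC(G_1)+2\AC(G_2)-4\AC(G)},
\]
each of which is produced by one oracle call to $\AC$, and the reduction is complete.

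The argument is essentially algebraic, so I do not anticipate a deep mathematical obstacle; the only points demanding care are that the mixture grammar be well formed — hence the renaming of non-terminals so that the union of the two rule sets is unambiguous and $S_0$ is genuinely fresh — and that $G$ remain a consistent $\PCFG$. Both follow immediately from the consistency of $G_1$ and $G_2$ together with the fact that the two new rules carry total probability one, so these verifications are routine.
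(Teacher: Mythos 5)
Your algebra is correct and the reduction you describe is valid: after renaming non-terminals apart, the mixture grammar $G$ built from $S_0\gives S_1$ and $S_0\gives S_2$ (each with probability $\frac{1}{2}$) satisfies $\Prob_G(x)=\frac{1}{2}\Prob_{G_1}(x)+\frac{1}{2}\Prob_{G_2}(x)$, the expansion $\AC(G)=\frac{1}{4}\AC(G_1)+\frac{1}{2}\COEM(G_1,G_2)+\frac{1}{4}\AC(G_2)$ is right, and so is the resulting formula $d_{\Ltwo}(G_1,G_2)=\sqrt{2\AC(G_1)+2\AC(G_2)-4\AC(G)}$. However, what this establishes is $\langle d_{\Ltwo},\PCFGclass\rangle\le_T\langle \AC,\PCFGclass\rangle$: an oracle for $\AC$ computes the distance. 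That is the \emph{converse} of the reduction the paper actually proves. The paper's proof assumes an algorithm for $d_{\Ltwo}$ and uses it to compute $\AC$: given $G$, build a dummy grammar $G_D$ generating a single string over a disjoint alphabet with probability $1$; then $\COEM(G,G_D)=0$ and $\COEM(G_D,G_D)=1$, hence $\AC(G)=d_{\Ltwo}(G,G_D)^2-1$, i.e.\ $\langle \AC,\PCFGclass\rangle\le_T\langle d_{\Ltwo},\PCFGclass\rangle$. The direction matters because of how the proposition is used immediately afterwards: the corollary deduces that $d_{\Ltwo}$ is uncomputable \emph{from} the uncomputability of $\AC$ (obtained via $\TAC$ and the ambiguity of $G_0$), and that inference needs hardness to transfer from $\AC$ up to $d_{\Ltwo}$, which is exactly the paper's direction. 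If your reduction were substituted for the paper's, the corollary would no longer follow: from ``$\AC$ is uncomputable'' and ``$d_{\Ltwo}\le_T\AC$'' one can conclude nothing about $d_{\Ltwo}$.

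That said, your argument is not wasted work, and it is not wrong as mathematics: it is precisely the half of the equivalence that the paper asserts but never proves, namely the claim at the end of that subsection that $\langle \AC,\PCFGclass\rangle$ and $\langle d_{\Ltwo},\PCFGclass\rangle$ are \emph{interreducible}. Indeed, under the usual convention that ``$A$ is at least as difficult as $B$'' means $B\le_T A$, the proposition's literal wording matches your direction rather than the paper's own proof, so the mismatch originates in the paper's phrasing. Still, when comparing against the paper: you proved the converse reduction (mixture grammar, oracle for $\AC$, three calls) where the paper proves the load-bearing one (dummy single-string grammar, oracle for $d_{\Ltwo}$, one call); the two together, not either alone, give the stated interreducibility.
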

\begin{proof}
Suppose we have an algorithm to compute the $\Ltwo$ distance. Then given any grammar $G$, we build a dummy grammar $G_D$ which only generates, with probability 1, a single string over a different alphabet. It then follows that
$$d_{\Ltwo}(G,G_D)=\sqrt{\COEM(G,G)-2\COEM(G,G_D)+\COEM(G_D,G_D)},$$
and since the intersection between the support languages for $G$ and $G_D$ is empty, $\COEM(G,G_D)=0$. On the other hand $\COEM(G_D,G_D)=1$, trivially.
Therefore $\COEM(G,G)=d_{\Ltwo}(G,G_D)^2-1$.
\end{proof}

\begin{corollary}
One cannot compute, given two $\PCFG$s $G_1$ and $G_2$, the $\Ltwo$ distance between $G_1$ and $G_2$.
\end{corollary}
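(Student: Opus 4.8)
The plan is to exhibit a Turing reduction showing that an algorithm computing $d_{\Ltwo}$ would let us compute the cross-coemission $\COEM(G_1,G_2)$, which the preceding proposition already proves to be uncomputable. The whole argument hinges on the identity rewriting $d_{\Ltwo}$ in terms of coemissions, namely $d_{\Ltwo}(G_1,G_2)^2=\COEM(G_1,G_1)-2\COEM(G_1,G_2)+\COEM(G_2,G_2)$, which I would solve for the single cross term:
$$\COEM(G_1,G_2)=\tfrac{1}{2}\bigl(\AC(G_1)+\AC(G_2)-d_{\Ltwo}(G_1,G_2)^2\bigr).$$

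Assuming an oracle for $d_{\Ltwo}$, the task then reduces to evaluating the three quantities on the right. The middle term is just a direct oracle call on the pair $(G_1,G_2)$. For the two auto-coemission terms I would invoke Proposition~\ref{coemd2reduction}: its proof shows that, for any grammar $G$, pairing $G$ with a dummy grammar $G_D$ over a disjoint alphabet forces $\COEM(G,G_D)=0$ and $\COEM(G_D,G_D)=1$, whence $\AC(G)=d_{\Ltwo}(G,G_D)^2-1$ is recovered from the same oracle. Applying this to $G_1$ and to $G_2$ yields $\AC(G_1)$ and $\AC(G_2)$, so every term of the displayed formula is computable with the $d_{\Ltwo}$ oracle, and hence so is $\COEM(G_1,G_2)$, contradicting the uncomputability of cross-coemission. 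Equivalently, one may phrase the whole thing as a reduction from the empty-intersection problem: build $\textsc{Mp}(G)$ and $\textsc{Mp}(G')$, compute $\COEM$ as above, and test whether it is $0$.

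The step I expect to need the most care is the handling of the two auto-coemission terms. They are not obviously computable in isolation, and it would be a mistake to silently assume they are; the reason the reduction goes through is precisely that Proposition~\ref{coemd2reduction} produces them from the very same $d_{\Ltwo}$ oracle, by comparing each grammar against a dummy over a fresh alphabet so that its cross term vanishes and its auto term equals $1$. Once that isolation is justified, the remainder is routine algebra together with a single appeal to the already-established uncomputability of $\COEM$.
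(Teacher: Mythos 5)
Your proof is correct, but it takes a genuinely different route from the paper's. Both arguments begin the same way, with Proposition~\ref{coemd2reduction}: an oracle for $d_{\Ltwo}$, applied against a dummy grammar over a fresh alphabet, yields $\AC(G)=d_{\Ltwo}(G,G_D)^2-1$ for any $G$. The paper then finishes by proving that $\AC$ itself is uncomputable: it takes the grammar $G_0$ built from a $\PCP$ instance (Construction~2), uses the closed form $\TAC(G_0)=\frac{1}{16n-4}$ of Property~\ref{tac_for_pcp}, and invokes Proposition~\ref{ambiguity} ($\AC(G_0)=\TAC(G_0)$ if and only if $G_0$ is unambiguous), so that computing $\AC(G_0)$ would decide $\PCP$. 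You instead invert the coemission identity to recover the cross term,
\begin{equation*}
\COEM(G_1,G_2)=\tfrac{1}{2}\bigl(\AC(G_1)+\AC(G_2)-d_{\Ltwo}(G_1,G_2)^2\bigr),
\end{equation*}
and appeal to the earlier proposition that $\COEM$ is uncomputable, whose proof rests on the undecidability of emptiness of intersection for context-free languages rather than on $\PCP$ ambiguity. Given what is already established at that point in the paper, your version is the more economical one: it needs no $\PCP$ construction, no tree-auto-coemission, and no ambiguity criterion, and you correctly flag and resolve the one delicate point, namely that the two auto-coemission terms must themselves be extracted from the $d_{\Ltwo}$ oracle rather than assumed computable. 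What the paper's longer route buys is the stronger intermediate fact that the auto-coemission of a single grammar is uncomputable, which is what supports the subsequent summary claim that $\langle \AC,\PCFGclass\rangle$ and $\langle d_{\Ltwo},\PCFGclass\rangle$ are interreducible; your argument establishes the uncomputability of $d_{\Ltwo}$ without directly yielding that fact about $\AC$.
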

\begin{proof}
By proposition \ref{coemd2reduction} all we have to prove is that computing the auto-coemission is impossible.
Let $G_0$ be the probabilistic context-free grammar built from an instance of $\PCP$. Suppose we can compute $\AC(G_0)$. Then since (by Property \ref{tac_for_pcp}) we can compute $\TAC(G_0)$, one could then solve the ambiguity problem via Proposition \ref{ambiguity}. This is impossible.
\end{proof}
Summarizing, $\langle d_{\Ltwo},\PCFGclass\rangle$ and $\langle \COEM,\PCFGclass\rangle$ are undecidable. Furthermore $\langle \AC,\PCFGclass\rangle$ and $\langle d_{\Ltwo},\PCFGclass\rangle$ are interreducible.

\subsection{With the KL divergence}
\begin{proposition}
One cannot compute, given two $\PCFG$s $G_1$ and $G_2$, $d_{KL}(G_1,G_2)$.
\end{proposition}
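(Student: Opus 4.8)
The plan is to reduce from the \emph{inclusion} problem for context-free grammars, listed above as undecidable, exploiting the fact that $d_{KL}$ is sensitive to the \emph{supports} of the two distributions rather than merely to their overlap (as co-emission was). Given two $\CFG$s $G$ and $G'$, put without loss of generality in proper form, I would build the $\PCFG$s $H=\mbox{\textsc{Mp}}(G)$ and $H'=\mbox{\textsc{Mp}}(G')$, so that $\Naming(H)=\Naming(G)$ and $\Naming(H')=\Naming(G')$, and then establish the equivalence
$$d_{KL}(H,H')<\infty \iff \Naming(G)\subseteq\Naming(G').$$
Any algorithm computing $d_{KL}$ (returning a finite value or $\infty$) would then decide inclusion, which is impossible; hence $d_{KL}$ is uncomputable.

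The mechanism driving the reduction is the presence of $\log\Prob_{H'}(x)$ in the summand. If $\Naming(G)\not\subseteq\Naming(G')$, choose $x\in\Naming(G)\setminus\Naming(G')$: then $\Prob_H(x)>0$ while $\Prob_{H'}(x)=0$, so the term $\Prob_H(x)\big(\log\Prob_H(x)-\log\Prob_{H'}(x)\big)$ equals $+\infty$, whereas every summand with $\Prob_H(x)=0$ contributes $0$ and no summand can contribute $-\infty$. Provided the remaining mass does not itself diverge to $-\infty$, this forces $d_{KL}(H,H')=+\infty$. This is precisely the asymmetry that makes KL behave differently from the symmetric distances treated earlier, tying finiteness to support inclusion rather than to emptiness of intersection.

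I expect the crux to be a single convergence guarantee that serves \emph{both} directions: that $H$ has finite string entropy and finite expected length. In a proper $\PCFG$ the number of leftmost-derivation steps producing $x$ is linear in $|x|$, so writing $p$ for the least rule probability one gets $-\log\Prob(x)\le (c|x|+c')\log\frac1p$ for both $H$ and $H'$; hence $-\sum_x\Prob_H(x)\log\Prob_H(x)$ and the cross term $-\sum_x\Prob_H(x)\log\Prob_{H'}(x)$ are each dominated by a constant multiple of $\sum_x\Prob_H(x)\,|x|$. Thus, once $\sum_x\Prob_H(x)\,|x|<\infty$, the negative (entropy) part is bounded below by a finite constant, so inclusion failure genuinely yields $+\infty$; and when inclusion holds the cross term is finite as well, giving $d_{KL}(H,H')<\infty$.

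The delicate step, and the one I expect to be the main obstacle, is ensuring $\sum_x\Prob_H(x)\,|x|<\infty$, since a consistent $\PCFG$ may be \emph{critical} and have infinite expected length. I would resolve this by letting \textsc{Mp} assign probabilities that keep the grammar strictly subcritical --- inflating the terminating rules so that the dominant eigenvalue of the expected-production matrix stays below $1$ --- which leaves the support $\Naming(G)$ unchanged but forces $\Prob_H(\Sigma^{\ge n})$ to decay geometrically, and hence yields both finite expected length and finite entropy. With this in place the displayed equivalence holds, and the reduction from inclusion completes the argument.
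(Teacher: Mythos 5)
Your proposal is correct and follows essentially the same route as the paper: reduce from the undecidable \CFG{} inclusion problem by making both grammars probabilistic via \textsc{Mp} and using the equivalence $d_{KL}(H,H')<\infty \iff \Naming(G)\subseteq\Naming(G')$. You in fact go beyond the paper's proof, which asserts this equivalence in a single sentence without justification: your observation that a consistent \PCFG{} can be critical (infinite expected length), so that the ``inclusion implies finiteness'' direction requires \textsc{Mp} to output a strictly subcritical grammar (together with the linear bound $-\log\Prob(x)=O(|x|)$ for proper grammars), supplies exactly the convergence argument that the paper leaves implicit.
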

\begin{proof}
Suppose we could compute the KL divergence between two $\PCFG$s. We should note that $d_{KL}(G_1,G_2)<\infty$ if and only if $\Naming(G_1)\subseteq \Naming(G_2)$. We would therefore be able to check if one context-free language is included in another one, which we know is impossible.
\end{proof}
The same proof can be used for the $\chi^2$ distance since $d_{\chi^2}(G_1,G_2)<\infty$ if and only if $\Naming(G_1)\subseteq \Naming(G_2)$. Summarizing, $\langle d_{KL},\PCFGclass\rangle$ and $\langle d_{\chi^2},\PCFGclass\rangle$ are undecidable.

\subsection{About the consensus string}
A first result is that computing the Chebyshev distance is at least as difficult as computing the most probable string :
Any $\PCFG$ $G$ can be converted into $G'$ with the same rules as $G$ but using a disjoint alphabet. Now, it is clear that $d_{\Linf}(G,G')=\Prob_G(CS(G))$:
\begin{proposition}\label{linking_CS_Linf}
$\langle CS,\PCFGclass\rangle$ is decidable if there exists an algorithm computing $\Linf$. \\
Ie, $\langle CS,\PCFGclass\rangle\le_T\langle d_{\Linf},\PCFGclass\rangle$.
\end{proposition}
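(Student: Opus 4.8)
The plan is to exhibit a single-query Turing reduction from $\langle CS,\PCFGclass\rangle$ to $\langle d_{\Linf},\PCFGclass\rangle$, built around the construction sketched just before the statement. Given an instance of $\langle CS,\PCFGclass\rangle$ — a $\PCFG$ $G$ and a candidate string $w$ — I would first relabel the terminal alphabet of $G$ by a bijection onto a fresh alphabet disjoint from $\Sigma$, leaving the nonterminals, rules and rule probabilities untouched; call the result $G'$. Since $G'$ is nothing but a renaming of $G$, it defines an isomorphic distribution, so $\max_{x}\Prob_{G'}(x)=\max_{x}\Prob_{G}(x)$, while the two support languages are disjoint (apart from $\es$).

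The key identity is then obtained by inspecting each term of the maximum. For every nonempty string $x$ over $\Sigma$ we have $\Prob_{G'}(x)=0$, and symmetrically for nonempty strings over the renamed alphabet, so $|\Prob_{G}(x)-\Prob_{G'}(x)|$ equals $\Prob_{G}(x)$ (respectively $\Prob_{G'}(x)$) on those strings. Maximising over all strings therefore gives $d_{\Linf}(G,G')=\max_{x}\Prob_{G}(x)=\Prob_{G}(CS(G))$, exactly the probability of the most probable string of $\Dis_{G}$, which is the equality asserted in the text.

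With this identity the reduction is immediate. I would compute $p=\Prob_{G}(w)$, which is feasible because parsing a $\PCFG$ by the adapted Earley or \textsc{Cky} algorithm runs in polynomial time and returns a rational. I would then issue the single query ``$d_{\Linf}(G,G')\le p$?'' to the oracle for $\langle d_{\Linf},\PCFGclass\rangle$. As $\Prob_{G}(w)\le\max_{x}\Prob_{G}(x)=d_{\Linf}(G,G')$ always holds, the oracle answers yes precisely when $\Prob_{G}(w)=\max_{x}\Prob_{G}(x)$, i.e. precisely when $w$ is a most probable string, and I return its answer. The variant phrased with an algorithm computing the value of $d_{\Linf}$ is even more direct: compute $M=d_{\Linf}(G,G')$ and $p=\Prob_{G}(w)$, then accept iff $p=M$.

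The only place where care is needed — and hence the main obstacle — is the empty string: $\es$ lies in both alphabets' Kleene stars with $\Prob_{G}(\es)=\Prob_{G'}(\es)$, so it contributes $0$ to the $\Linf$ maximum, and the identity above really computes $\max_{x\neq\es}\Prob_{G}(x)$ rather than the genuine maximum. This discrepancy matters only when $\es$ is (tied for) most probable. I would dispose of it by computing $\Prob_{G}(\es)$ separately, again by parsing, and declaring $w$ most probable iff both $\Prob_{G}(w)\ge\Prob_{G}(\es)$ holds directly and the oracle reports $d_{\Linf}(G,G')\le\Prob_{G}(w)$. No deeper difficulty arises, as the whole procedure reduces to one oracle call together with polynomial-time parsing.
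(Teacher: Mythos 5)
Your proposal is correct and follows essentially the same route as the paper: the paper's (very terse) argument is exactly your construction — rename the terminals of $G$ into a disjoint alphabet to get $G'$, use the identity $d_{\Linf}(G,G')=\Prob_G(CS(G))$, and decide whether $w$ is the consensus string by comparing $\Prob_G(w)$ (computable by parsing) against that value via a single threshold query to the $\langle d_{\Linf},\PCFGclass\rangle$ oracle. Your empty-string caveat is in fact a small correction to the paper rather than a gap in your own argument: since $\Prob_{G'}(\es)=\Prob_G(\es)$, the renaming construction really gives $d_{\Linf}(G,G')=\max_{x\neq\es}\Prob_G(x)$, so the paper's stated identity fails precisely when $\es$ carries the maximal probability, and your additional direct check $\Prob_G(w)\ge\Prob_G(\es)$ repairs exactly the case the paper silently ignores.
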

Proposition \ref{linking_CS_Linf} does not preclude that $\langle CS,\PCFGclass\rangle$ may be decidable if $\langle \Linf,\PCFGclass\rangle$ is not.\

In fact $\langle CS,\PCFGclass\rangle$ is decidable:
\begin{lemma}\label{yes we can}
Let $0<\epsilon<1$. Given any consistent $\PCFG$ $G$, there exists $n\ge0$ such that $\Prob_G(\Sigma^{>n})<\epsilon$
\end{lemma}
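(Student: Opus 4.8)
The plan is to reduce the statement to the elementary fact that the tail of a convergent series of nonnegative terms must vanish. First I would invoke the consistency of $G$, which by definition gives $\sum_{x\in\Sigmastar}\Prob_G(x)=1$. Since every string has a well-defined finite length, I would regroup this sum according to length, writing $1=\sum_{n\ge 0}\Prob_G(\Sigma^n)$, where $\Prob_G(\Sigma^n)=\sum_{|x|=n}\Prob_G(x)$ in the notation already fixed for the probability of a language. This regrouping is legitimate precisely because all terms are nonnegative, so the value of the sum does not depend on the order in which the (countably many) terms are added.

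Next I would observe that the partial sums $\Prob_G(\Sigma^{\le n})=\sum_{i=0}^{n}\Prob_G(\Sigma^i)$ form a nondecreasing sequence, bounded above by $1$ and converging to $1$. Consequently the complementary tail $\Prob_G(\Sigma^{>n})=1-\Prob_G(\Sigma^{\le n})$ is a nonincreasing sequence of nonnegative reals tending to $0$. By the very definition of a limit, for the given $\epsilon$ with $0<\epsilon<1$ there is some index $n$ beyond which this tail drops below $\epsilon$, which is exactly the desired conclusion.

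There is essentially no hard step here: the only point that deserves care is the rearrangement of the defining sum into a sum over lengths, which I would justify by the order-independence of sums of nonnegative reals. The content worth emphasising is rather what the lemma is \emph{for}: combined with the earlier remark that $\Prob_G(\Sigma^{<n})$ is computable in polynomial time, this existence statement becomes effective. One may simply evaluate $\Prob_G(\Sigma^{<n})$ for $n=1,2,\ldots$ until it exceeds $1-\epsilon$, and the lemma guarantees that this search terminates. It is this effective form that supplies the bounded search needed to settle decidability of $\langle CS,\PCFGclass\rangle$.
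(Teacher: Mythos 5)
Your proof is correct and follows essentially the same route as the paper: the paper's (one-line) proof likewise observes that $\lim_{n\to+\infty}\Prob(\Sigma^{\le n})=1$ by consistency and picks $n$ with $\Prob(\Sigma^{\le n})\ge 1-\epsilon$, so that $\Prob(\Sigma^{>n})<\epsilon$. Your additional justification of the regrouping by length (order-independence of sums of nonnegative terms) and the remark on effectiveness are sound elaborations of the same argument.
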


\begin{proof}
Because $\lim_{n\rightarrow +\infty} 
\Prob(\Sigma^{\le n}) = 1$, 
there exists $n$ such that 
$\Prob(\Sigma^{\le n}) \geq 1 - \epsilon$,
hence $\Prob(\Sigma^{>n})<\epsilon$.
\end{proof}

\begin{proposition}\label{CS_is_decidable}
$\langle CS,\PCFGclass\rangle$ is decidable.
\end{proposition}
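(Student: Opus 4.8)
The plan is to reduce the question to a finite search by combining Lemma~\ref{yes we can} with the computability of finite-length probability masses. Given an instance $(G,w)$, I would first compute $p=\Prob_G(w)$ by ordinary probabilistic parsing (Earley or \textsc{Cky}), which terminates since $w$ is fixed. Two boundary cases are then immediate: if $p=0$, then by consistency $\sum_{x}\Prob_G(x)=1$ forces some string to carry positive probability, so $w$ cannot be most probable and the answer is ``no''; if $p=1$, then every other string has probability $0$ and the answer is ``yes''. In both cases we avoid instantiating the lemma with an out-of-range $\epsilon$.

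In the remaining case $0<p<1$ I would instantiate Lemma~\ref{yes we can} with $\epsilon=p$ to obtain a length bound. Concretely, I would compute $\Prob_G(\Sigma^{>n})=\Prob_G(\Sigma^{\ge n+1})$ for $n=0,1,2,\dots$ — each value being computable in time polynomial in $n$, as recalled in Section~\ref{sec:def} — and halt at the first $n$ for which $\Prob_G(\Sigma^{>n})<p$. This search is guaranteed to terminate precisely because Lemma~\ref{yes we can} asserts that such an $n$ exists.

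The crucial observation is the bound this yields: for any string $x$ with $|x|>n$ we have $\Prob_G(x)\le\sum_{|y|>n}\Prob_G(y)=\Prob_G(\Sigma^{>n})<p$, so no string strictly longer than $n$ can match or exceed the probability of $w$. Hence every candidate that could tie or beat $w$ lies in the finite set $\Sigma^{\le n}$. I would then enumerate all (finitely many) strings of length at most $n$, compute $\Prob_G(x)$ for each by parsing, and compare the resulting maximum with $p$: the answer is ``yes'' exactly when $w$ attains this maximum (as the unique maximiser, should the definition of \emph{the} consensus string require uniqueness).

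The main obstacle is not any single step but making the interplay explicit: the length bound $n$ is furnished non-constructively by Lemma~\ref{yes we can}, so correctness of the procedure rests on the fact that $\Prob_G(\Sigma^{>n})$ is itself effectively computable and converges to $0$. These two facts together convert the bare existence statement ``there exists $n$'' into a genuinely terminating search. Once $n$ is in hand the decision collapses to a finite maximisation over $\Sigma^{\le n}$, so no further delicacy arises.
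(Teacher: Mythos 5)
Your proposal is correct and takes essentially the same route as the paper: both rest on Lemma~\ref{yes we can} to truncate the enumeration of $\Sigma^*$ to a finite set $\Sigma^{\le n}$ beyond which no string can compete, and then decide by a finite comparison of parser-computed probabilities. The only difference is cosmetic --- the paper's Algorithm~\ref{algo:enumeration} uses an adaptive threshold (the best probability found so far) and thereby computes the consensus string outright, whereas you fix the threshold $p=\Prob_G(w)$ from the queried string and handle the endpoints $p=0$, $p=1$ separately; both arguments hinge on the identical tail-mass observation.
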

\begin{proof}
The proof for this is the existence of an algorithm that takes as input a PCFG $G$ and always terminates by returning the consensus string for $G$. Algorithm \ref{algo:enumeration} does exactly this.

\end{proof}

Algorithm \ref{algo:enumeration} goes through all the possible strings in $\Sigma^0$, $\Sigma^1$, $\Sigma^2\ldots$, and checks the probability that $G$ assigns to each string. It stores the string with the highest probability value ($Current\_Best$) and the highest probability value itself ($Current\_Prob$). It also subtracts from 1 all the probability values encountered ($Remaining\_Prob$). So, after the $i^{th}$ loop, $Current\_Best$ is the most probable string in $\Sigma^{<i}$, $Current\_Prob$ is the probability of $Current\_Best$ and $Remaining\_Prob$ is $1 - \Prob(\Sigma^{< i})$ which is equal to $\Prob(\Sigma^{\geq i})$. Using Lemma \ref{yes we can}, we can say that for any $\epsilon$, $0 < \epsilon < 1$, there exists an $i$ such that after the $i^{th}$ iteration, $Remaining\_Prob$ is smaller than $\epsilon$. This means that the algorithm must halt at some point. Moreover, if the most probable string in $\Sigma^{<i}$ has probability higher than $\Prob(\Sigma^{\geq i})$, then we can be sure that this is the consensus string. This means that the algorithm always returns the consensus string.

\begin{algorithm}
%  \SetKw{KwPQ}{priority_queue}

  \SetKw{Kwtrue}{true}
  \SetKw{Kwfalse}{false}
  \SetKw{Kwand}{and}
  \KwData{ a $\PCFG$ $G$}
  \KwResult{$w$, the most probable string}
$Current\_Prob=0$\;
$Current\_Best=\lambda$\;
$Remaining\_Prob=1$\;
$n=0$\;
$Continue=\Kwtrue$\;
\While{$Continue$}
{  \If{$Remaining\_Prob < Current\_Prob$}
   {$Continue=\Kwfalse$\;}
   \Else
   {\ForEach{ $w \in \Sigma^n$ }
	{ $p = \Prob_{G}(w)$\;
      $Remaining\_Prob = Remaining\_Prob - p$\;
	  \If{$p>Current\_Prob$}
	    {$Current\_Prob=p$\;$Current\_Best=w$\;}
    }
    $n=n+1$\;
   }	  
}
\Return $Current\_Best$
\caption{Finding the consensus string}\label{algo:enumeration}
\end{algorithm}

\subsection{With the Chebyshev distance}
\begin{proposition}
$\langle d_{\Linf},\PCFGclass\rangle$ and $\langle EQ,\PCFGclass\rangle$ are interreducible.
\end{proposition}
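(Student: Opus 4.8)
The plan is to establish the two Turing reductions separately. The direction $\langle EQ,\PCFGclass\rangle\le_T\langle d_{\Linf},\PCFGclass\rangle$ is immediate from the definition of the Chebyshev distance: since $d_{\Linf}(G_1,G_2)=\max_{x\in\Sigma^*}|\Prob_{G_1}(x)-\Prob_{G_2}(x)|\ge 0$, we have $d_{\Linf}(G_1,G_2)=0$ exactly when $\Prob_{G_1}(x)=\Prob_{G_2}(x)$ for every $x$, i.e. exactly when $G_1$ and $G_2$ are equivalent. Hence, given an algorithm deciding $\langle d_{\Linf},\PCFGclass\rangle$, I would simply query it with the rational bound $k=0$: the answer is affirmative iff $d_{\Linf}(G_1,G_2)\le 0$ iff $d_{\Linf}(G_1,G_2)=0$ iff $G_1\equiv G_2$.

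The other direction $\langle d_{\Linf},\PCFGclass\rangle\le_T\langle EQ,\PCFGclass\rangle$ is where the real work lies, and the idea is to reuse the enumeration strategy of Algorithm~\ref{algo:enumeration}. Given $G_1$, $G_2$ and a rational $k$, I would first query the $\langle EQ,\PCFGclass\rangle$ oracle. If $G_1\equiv G_2$, then $d_{\Linf}(G_1,G_2)=0$ and the instance is accepted iff $k\ge 0$. Otherwise I would enumerate $\Sigma^0,\Sigma^1,\Sigma^2,\ldots$, maintaining the running maximum $M_n=\max_{|x|\le n}|\Prob_{G_1}(x)-\Prob_{G_2}(x)|$, and at each stage compute the tail masses $\Prob_{G_1}(\Sigma^{>n})$ and $\Prob_{G_2}(\Sigma^{>n})$ (which are computable by the prefix-probability variants recalled in Section~\ref{sec:def}). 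For any string $x$ with $|x|>n$ we have $|\Prob_{G_1}(x)-\Prob_{G_2}(x)|\le\max(\Prob_{G_1}(\Sigma^{>n}),\Prob_{G_2}(\Sigma^{>n}))$, so once $M_n$ reaches or exceeds this tail bound no string of greater length can beat $M_n$, and I may output $d_{\Linf}(G_1,G_2)=M_n$ and compare it with $k$.

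Correctness of the stopping rule is then clear: when it fires, $M_n$ already dominates the contribution of every unexplored longer string, so $M_n$ equals the true maximum, which is therefore attained at a finite string and is a rational number (a difference of sums of products of rational rule probabilities), permitting an exact comparison with $k$. The subtle point is termination. By Lemma~\ref{yes we can} both tail bounds tend to $0$ as $n\to\infty$. Because the oracle has already reported $G_1\not\equiv G_2$, there is a shortest string $x_0$, of some finite length $m$, with $c:=|\Prob_{G_1}(x_0)-\Prob_{G_2}(x_0)|>0$; from $n\ge m$ onwards $M_n\ge c>0$ stays bounded away from $0$ while the tail bound shrinks to $0$, so the stopping condition is guaranteed to trigger after finitely many iterations.

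The main obstacle, and the precise place where the EQ oracle is indispensable, is exactly this termination argument: if $G_1\equiv G_2$ the running maximum stays $0$ forever while the tail bound, though shrinking, never vanishes, so the unguarded enumeration would loop indefinitely. The single oracle call eliminates this bad case, after which $d_{\Linf}$ is strictly positive and the tail estimate of Lemma~\ref{yes we can} forces halting. Combined with the trivial first reduction, this yields the claimed interreducibility. One may note the pleasing contrast with Proposition~\ref{linking_CS_Linf}: there $\Linf$ was shown to be at least as hard as computing the consensus string, whereas here the very same enumeration idea, now guarded by an equivalence test, shows that $\Linf$ is no harder than equivalence.
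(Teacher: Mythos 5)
Your proposal is correct and follows essentially the same route as the paper: the trivial direction uses $d_{\Linf}(G_1,G_2)=0\Leftrightarrow G_1\equiv G_2$ (queried at $k=0$), and the other direction calls the $\langle EQ,\PCFGclass\rangle$ oracle to rule out the non-terminating equivalent case, then enumerates strings by length while comparing the running maximum difference against the computable tail masses $\Prob_{G_i}(\Sigma^{>n})$, exactly as in the paper's adaptation of Algorithm~\ref{algo:enumeration}. Your write-up is in fact more explicit than the paper's about the stopping rule and the termination argument, but the underlying ideas coincide.
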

\begin{proof}
$\langle d_{\Linf},\PCFGclass\rangle\le_T \langle EQ,\PCFGclass\rangle.$\\
Suppose $\langle EQ,\PCFGclass\rangle$ is decidable. Then if $G_1$ and $G_2$ are equivalent, $d_{\Linf}(G_1,G_2)=0$.
If $G_1$ and $G_2$ are not equivalent, there exists a smallest string $x$ such that $\Prob_{G_1}(x)\neq\Prob_{G_2}(x)$. An enumeration algorithm will find this initial string $x$, whose probability is $p$. Note that if $\Prob_{G_1}(\Sigma^{>n})<p$ and $\Prob_{G_2}(\Sigma^{>n})<p$, we can be sure that no string in $\Sigma^{>n}$ has a difference of probabilities of more than $p$. This allows us to adapt Algorithm \ref{algo:enumeration} to reach the length $n$ at which we are sure that no string $x$ longer than $n$ can have probability more than $|\Prob_{G_1}(x)-\Prob_{G_2}(x)|$. The algorithm will therefore halt.

The converse ($\langle EQ,\Grammarclass\rangle\le_T\langle d_{\Linf},\Grammarclass\rangle$) is trivial since $d_{\Linf}(G_1,G_2)=0\Leftrightarrow$ $G_1$ and $G_2$ are equivalent.
\end{proof}

\section{Conclusion}\label{sec:con}
The results presented in this work are threefold:
\begin{itemize}
\item the only positive result concerns the consensus string, which is computable;
\item the multiple ambiguity problem (for $\CFG$s) is equivalent to the Chebyshev distance problem (for $\PCFG$s), which in turn is equivalent to the equivalence problem (also for $\PCFG$s);
\item all the other results correspond to undecidable problems.
\end{itemize}

Interestingly, if we consider the Chebyshev distance problem as a decision problem, namely:

\vspace{0.1in}
\noindent\textbf{Name:} Chebyshev distance-$\le$  \\
\textbf{Instance:} Two $\PCFG$s $G_1$ and $G_2$, $\epsilon:\,0\le\epsilon\le 1$\\
\textbf{Question:} Is $d_{\Linf}(G_1,G_2)\le \epsilon$?

\vspace{0.1in}
the problem is actually decidable in all cases but one: when $\epsilon=0$.
Ideally, one would hope to be able to bound the length of the strings over which the search should be done. This is possible in the case of probabilistic finite automata where it is proved that (1) a $\PFA$ can be transformed into an equivalent $\es$-free $\PFA$ $\A$, and, (2) the length of any string of probability at least $p$ is upper bounded by $\frac{|\A|}{p^2}$, with $|\A|$ the size (number of states+1) of the $\PFA$ \cite{higu13b}.

It should be noted that if the question is: Is $d_{\Linf}(G_1,G_2)< \epsilon$?, the problem becomes decidable.

Moreover, it would be important to obtain approximation results, ie,

\vspace{0.1in}
\noindent\textbf{Name:} X-distance-approx  \\
\textbf{Instance:} Two $\PCFG$s $G_1$ and $G_2$, $\epsilon:\,0<\epsilon\le 1$\\
\textbf{Question:} Compute $a$ such that $|a-d_X(G_1,G_2)|\le \epsilon$?

\vspace{0.1in}

Such results have been studied in the case of probabilistic finite state machines, for example, recently, in \cite{chen14}. In the case of the distances used in this work, the decidability of approximation would be ensured by  Lemma \ref{yes we can}. But the question of finding good approximations in polynomial time is clearly an interesting problem.
\section*{Acknowledgement}
The authors thank James Worrell, Achilles Beros and Uli Fahrenberg for advice and discussions.
\bibliographystyle{eptcs}
\bibliography{PCFG_distances}

\begin{thebibliography}{10}
\providecommand{\bibitemdeclare}[2]{}
\providecommand{\surnamestart}{}
\providecommand{\surnameend}{}
\providecommand{\urlprefix}{Available at }
\providecommand{\url}[1]{\texttt{#1}}
\providecommand{\href}[2]{\texttt{#2}}
\providecommand{\urlalt}[2]{\href{#1}{#2}}
\providecommand{\doi}[1]{doi:\urlalt{http://dx.doi.org/#1}{#1}}
\providecommand{\bibinfo}[2]{#2}

\bibitemdeclare{inproceedings}{abne99}
\bibitem{abne99}
\bibinfo{author}{S.~\surnamestart Abney\surnameend},
  \bibinfo{author}{D.~\surnamestart McAllester\surnameend} \&
  \bibinfo{author}{F.~\surnamestart Pereira\surnameend} (\bibinfo{year}{1999}):
  \emph{\bibinfo{title}{Relating Probabilistic Grammars and Automata}}.
\newblock In: {\sl \bibinfo{booktitle}{Proceedings of the 27th Annual Meeting
  of the Association for Computational Linguistics}}.

\bibitemdeclare{mastersthesis}{bala93}
\bibitem{bala93}
\bibinfo{author}{V.~\surnamestart Balasubramanian\surnameend}
  (\bibinfo{year}{1993}): \emph{\bibinfo{title}{Equivalence and Reduction of
  Hidden {M}arkov Models}}.
\newblock Master's thesis, \bibinfo{school}{Department of Electrical
  Engineering and Computer Science, \textsc{Mit}}.
\newblock \bibinfo{note}{Issued as AI Technical Report 1370}.

\bibitemdeclare{article}{benedi05}
\bibitem{benedi05}
\bibinfo{author}{J-M. \surnamestart Bened\'{\i}\surnameend} \&
  \bibinfo{author}{J.-A. \surnamestart S{\'a}nchez\surnameend}
  (\bibinfo{year}{2005}): \emph{\bibinfo{title}{Estimation of stochastic
  context-free grammars and their use as language models}}.
\newblock {\sl \bibinfo{journal}{Computer Speech {\&} Language}}
  \bibinfo{volume}{19}(\bibinfo{number}{3}), pp. \bibinfo{pages}{249--274}.

\bibitemdeclare{article}{carr97}
\bibitem{carr97}
\bibinfo{author}{R.~C. \surnamestart Carrasco\surnameend}
  (\bibinfo{year}{1997}): \emph{\bibinfo{title}{Accurate computation of the
  relative entropy between stochastic regular grammars}}.
\newblock {\sl \bibinfo{journal}{\textsc{Rairo} (Theoretical Informatics and
  Applications)}} \bibinfo{volume}{31}(\bibinfo{number}{5}), pp.
  \bibinfo{pages}{437--444}.

\bibitemdeclare{inproceedings}{casa00a}
\bibitem{casa00a}
\bibinfo{author}{F.~\surnamestart Casacuberta\surnameend} \&
  \bibinfo{author}{C.~\surnamestart de~la Higuera\surnameend}
  (\bibinfo{year}{2000}): \emph{\bibinfo{title}{Computational complexity of
  problems on probabilistic grammars and transducers}}.
\newblock In \bibinfo{editor}{A.~L. \surnamestart de~Oliveira\surnameend},
  editor: {\sl \bibinfo{booktitle}{Grammatical Inference: Algorithms and
  Applications, Proceedings of \textsc{Icgi} '00}}, {\sl
  \bibinfo{series}{\textsc{Lnai}}} \bibinfo{volume}{1891},
  \bibinfo{publisher}{Springer-Verlag}, pp. \bibinfo{pages}{15--24}.

\bibitemdeclare{inproceedings}{chen14}
\bibitem{chen14}
\bibinfo{author}{T.~\surnamestart Chen\surnameend} \&
  \bibinfo{author}{S.~\surnamestart Kiefer\surnameend} (\bibinfo{year}{2014}):
  \emph{\bibinfo{title}{On the total variation distance of labelled {M}arkov
  chains}}.
\newblock In: {\sl \bibinfo{booktitle}{Proceedings of LICS 2014}},
  \doi{http://arxiv.org/abs/1405.2852}.

\bibitemdeclare{article}{cort07}
\bibitem{cort07}
\bibinfo{author}{C.~\surnamestart Cortes\surnameend},
  \bibinfo{author}{M.~\surnamestart Mohri\surnameend} \&
  \bibinfo{author}{A.~\surnamestart Rastogi\surnameend} (\bibinfo{year}{2007}):
  \emph{\bibinfo{title}{$L_p$ Distance and equivalence of probabilistic
  automata}}.
\newblock {\sl \bibinfo{journal}{International Journal of Foundations of
  Computer Science}} \bibinfo{volume}{18}(\bibinfo{number}{4}), pp.
  \bibinfo{pages}{761--779}.

\bibitemdeclare{article}{cort08}
\bibitem{cort08}
\bibinfo{author}{C.~\surnamestart Cortes\surnameend},
  \bibinfo{author}{M.~\surnamestart Mohri\surnameend},
  \bibinfo{author}{A.~\surnamestart Rastogi\surnameend} \&
  \bibinfo{author}{M.~\surnamestart Riley\surnameend} (\bibinfo{year}{2008}):
  \emph{\bibinfo{title}{On the Computation of the Relative Entropy of
  Probabilistic Automata}}.
\newblock {\sl \bibinfo{journal}{International Journal on Foundations of
  Computer Science}} \bibinfo{volume}{19}(\bibinfo{number}{1}), pp.
  \bibinfo{pages}{219--242}.

\bibitemdeclare{inproceedings}{espa04}
\bibitem{espa04}
\bibinfo{author}{J.~\surnamestart Esparza\surnameend},
  \bibinfo{author}{A.~\surnamestart Kucera\surnameend} \&
  \bibinfo{author}{R.~\surnamestart Mayr\surnameend} (\bibinfo{year}{2004}):
  \emph{\bibinfo{title}{Model Checking Probabilistic Pushdown Automata}}.
\newblock In: {\sl \bibinfo{booktitle}{Proceedings of LICS}},
  \bibinfo{publisher}{IEEE Computer Society}, pp. \bibinfo{pages}{12--21}.

\bibitemdeclare{article}{etessami2009}
\bibitem{etessami2009}
\bibinfo{author}{K.~\surnamestart Etessami\surnameend} \&
  \bibinfo{author}{M.~\surnamestart Yannakakis\surnameend}
  (\bibinfo{year}{2009}): \emph{\bibinfo{title}{Recursive {M}arkov chains,
  stochastic grammars, and monotone systems of nonlinear equations}}.
\newblock {\sl \bibinfo{journal}{Journal of the ACM}}
  \bibinfo{volume}{56}(\bibinfo{number}{1}), pp. \bibinfo{pages}{1--66}.

\bibitemdeclare{article}{fore12}
\bibitem{fore12}
\bibinfo{author}{V.~\surnamestart Forejt\surnameend},
  \bibinfo{author}{P.~\surnamestart Jancar\surnameend},
  \bibinfo{author}{S.~\surnamestart Kiefer\surnameend} \&
  \bibinfo{author}{J.~\surnamestart Worrell\surnameend} (\bibinfo{year}{2012}):
  \emph{\bibinfo{title}{Bisimilarity of Probabilistic Pushdown Automata}}.
\newblock {\sl \bibinfo{journal}{CoRR}} \bibinfo{volume}{abs/1210.2273}.

\bibitemdeclare{article}{fore14}
\bibitem{fore14}
\bibinfo{author}{V.~\surnamestart Forejt\surnameend},
  \bibinfo{author}{P.~\surnamestart Jancar\surnameend},
  \bibinfo{author}{S.~\surnamestart Kiefer\surnameend} \&
  \bibinfo{author}{J.~\surnamestart Worrell\surnameend} (\bibinfo{year}{2014}):
  \emph{\bibinfo{title}{Language Equivalence of Probabilistic Pushdown
  Automata}}.
\newblock {\sl \bibinfo{journal}{Information and Computation}}
  \bibinfo{volume}{237}, pp. \bibinfo{pages}{1--11}.

\bibitemdeclare{article}{fubo75}
\bibitem{fubo75}
\bibinfo{author}{K.~S. \surnamestart Fu\surnameend} \& \bibinfo{author}{T.~L.
  \surnamestart Booth\surnameend} (\bibinfo{year}{1975}):
  \emph{\bibinfo{title}{Grammatical Inference: Introduction and Survey. {P}art
  {I} and {II}}}.
\newblock {\sl \bibinfo{journal}{\textsc{Ieee} Transactions on Syst. Man. and
  Cybern.}} \bibinfo{volume}{5}, pp. \bibinfo{pages}{59--72 and 409--423}.

\bibitemdeclare{article}{gecs10}
\bibitem{gecs10}
\bibinfo{author}{R.~\surnamestart Gecse\surnameend} \&
  \bibinfo{author}{A.~\surnamestart Kovacs\surnameend} (\bibinfo{year}{2010}):
  \emph{\bibinfo{title}{Consistency of stochastic context-free grammars}}.
\newblock {\sl \bibinfo{journal}{Mathematical and Computer Modelling}}
  \bibinfo{volume}{52}(\bibinfo{number}{3–4}), pp. \bibinfo{pages}{490 --
  500}.

\bibitemdeclare{book}{harr78}
\bibitem{harr78}
\bibinfo{author}{M.~H. \surnamestart Harrison\surnameend}
  (\bibinfo{year}{1978}): \emph{\bibinfo{title}{Introduction to Formal Language
  Theory}}.
\newblock \bibinfo{publisher}{Addison-Wesley Publishing Company, Inc.},
  \bibinfo{address}{Reading, MA}.

\bibitemdeclare{book}{higu10}
\bibitem{higu10}
\bibinfo{author}{C.~\surnamestart de~la Higuera\surnameend}
  (\bibinfo{year}{2010}): \emph{\bibinfo{title}{Grammatical inference: learning
  automata and grammars}}.
\newblock \bibinfo{publisher}{Cambridge University Press}.

\bibitemdeclare{inproceedings}{higu13b}
\bibitem{higu13b}
\bibinfo{author}{C.~\surnamestart de~la Higuera\surnameend} \&
  \bibinfo{author}{J.~\surnamestart Oncina\surnameend} (\bibinfo{year}{2013}):
  \emph{\bibinfo{title}{Computing the Most Probable String with a Probabilistic
  Finite State Machine}}.
\newblock In: {\sl \bibinfo{booktitle}{Proceedings of \textsc{Fsmnlp}}}.

\bibitemdeclare{article}{higu13a}
\bibitem{higu13a}
\bibinfo{author}{C.~\surnamestart de~la Higuera\surnameend} \&
  \bibinfo{author}{J.~\surnamestart Oncina\surnameend} (\bibinfo{year}{2013}):
  \emph{\bibinfo{title}{The Most Probable String: an Algorithmic Study}}.
\newblock {\sl \bibinfo{journal}{Journal of Logic and Computation, \texttt{doi:
  10.1093/logcom/exs049}}}.

\bibitemdeclare{inproceedings}{jago01}
\bibitem{jago01}
\bibinfo{author}{A.~\surnamestart Jagota\surnameend}, \bibinfo{author}{R.~B.
  \surnamestart Lyngs{\o}\surnameend} \& \bibinfo{author}{C.~N.~S.
  \surnamestart Pedersen\surnameend} (\bibinfo{year}{2001}):
  \emph{\bibinfo{title}{Comparing a Hidden {M}arkov Model and a Stochastic
  Context-Free Grammar}}.
\newblock In: {\sl \bibinfo{booktitle}{Proceedings of \textsc{Wabi} '01}}, {\sl
  \bibinfo{series}{\textsc{Lncs}}} \bibinfo{volume}{2149},
  \bibinfo{publisher}{Springer-Verlag}, pp. \bibinfo{pages}{69--84}.

\bibitemdeclare{article}{jelinek91}
\bibitem{jelinek91}
\bibinfo{author}{F.~\surnamestart Jelinek\surnameend} \& \bibinfo{author}{J.D.
  \surnamestart Lafferty\surnameend} (\bibinfo{year}{1991}):
  \emph{\bibinfo{title}{Computation of the Probability of Initial Substring
  Generation by Stochastic Context-Free Grammars}}.
\newblock {\sl \bibinfo{journal}{Computational Linguistics}}
  \bibinfo{volume}{17}(\bibinfo{number}{3}), pp. \bibinfo{pages}{315--323}.

\bibitemdeclare{incollection}{jelinek92}
\bibitem{jelinek92}
\bibinfo{author}{F.~\surnamestart Jelinek\surnameend}, \bibinfo{author}{J.D.
  \surnamestart Lafferty\surnameend} \& \bibinfo{author}{R.L. \surnamestart
  Mercer\surnameend} (\bibinfo{year}{1992}): \emph{\bibinfo{title}{Basic
  Methods of Probabilistic Context Free Grammars}}.
\newblock In \bibinfo{editor}{P.~\surnamestart Laface\surnameend} \&
  \bibinfo{editor}{R.~\surnamestart De~Mori\surnameend}, editors: {\sl
  \bibinfo{booktitle}{Speech Recognition and Understanding --- Recent Advances,
  Trends and Applications}}, \bibinfo{publisher}{Springer-Verlag}, pp.
  \bibinfo{pages}{345--360}.

\bibitemdeclare{article}{johnson98}
\bibitem{johnson98}
\bibinfo{author}{M.~\surnamestart Johnson\surnameend} (\bibinfo{year}{1998}):
  \emph{\bibinfo{title}{PCFG Models of Linguistic Tree Representations}}.
\newblock {\sl \bibinfo{journal}{Comput. Linguist.}}
  \bibinfo{volume}{24}(\bibinfo{number}{4}), pp. \bibinfo{pages}{613--632}.

\bibitemdeclare{inproceedings}{jurafsky95}
\bibitem{jurafsky95}
\bibinfo{author}{D.~\surnamestart Jurafsky\surnameend},
  \bibinfo{author}{C.~\surnamestart Wooters\surnameend},
  \bibinfo{author}{J.~\surnamestart Segal\surnameend},
  \bibinfo{author}{A.~\surnamestart Stolcke\surnameend},
  \bibinfo{author}{E.~\surnamestart Fosler\surnameend},
  \bibinfo{author}{G.~\surnamestart Tajchaman\surnameend} \&
  \bibinfo{author}{N.~\surnamestart Morgan\surnameend} (\bibinfo{year}{1995}):
  \emph{\bibinfo{title}{Using a stochastic context-free grammar as a language
  model for speech recognition}}.
\newblock In: {\sl \bibinfo{booktitle}{Acoustics, Speech, and Signal
  Processing, 1995. ICASSP-95., 1995 International Conference on}},
  \bibinfo{volume}{1}, \bibinfo{organization}{IEEE}, pp.
  \bibinfo{pages}{189--192}.

\bibitemdeclare{inproceedings}{klein03}
\bibitem{klein03}
\bibinfo{author}{D.~\surnamestart Klein\surnameend} \&
  \bibinfo{author}{C.~\surnamestart Manning\surnameend} (\bibinfo{year}{2003}):
  \emph{\bibinfo{title}{Accurate Unlexicalized Parsing}}.
\newblock In: {\sl \bibinfo{booktitle}{Proceedings of the 41st Annual Meeting
  on Association for Computational Linguistics - Volume 1}},
  \bibinfo{series}{ACL '03}, pp. \bibinfo{pages}{423--430}.

\bibitemdeclare{book}{kuic86}
\bibitem{kuic86}
\bibinfo{author}{W.~\surnamestart Kuich\surnameend} \&
  \bibinfo{author}{A.~\surnamestart Salomaa\surnameend} (\bibinfo{year}{1986}):
  \emph{\bibinfo{title}{Semirings, Automata, Languages}}.
\newblock \bibinfo{publisher}{Springer-Verlag}.

\bibitemdeclare{inproceedings}{lyng01}
\bibitem{lyng01}
\bibinfo{author}{R.~B. \surnamestart Lyngs{\o}\surnameend} \&
  \bibinfo{author}{C.~N.~S. \surnamestart Pedersen\surnameend}
  (\bibinfo{year}{2001}): \emph{\bibinfo{title}{Complexity of Comparing Hidden
  {M}arkov Models}}.
\newblock In: {\sl \bibinfo{booktitle}{Proceedings of \textsc{Isaac} '01}},
  {\sl \bibinfo{series}{\textsc{Lncs}}} \bibinfo{volume}{2223},
  \bibinfo{publisher}{Springer-Verlag}, pp. \bibinfo{pages}{416--428}.

\bibitemdeclare{article}{lyng02}
\bibitem{lyng02}
\bibinfo{author}{R.~B. \surnamestart Lyngs{\o}\surnameend} \&
  \bibinfo{author}{C.~N.~S. \surnamestart Pedersen\surnameend}
  (\bibinfo{year}{2002}): \emph{\bibinfo{title}{The Consensus String Problem
  and the Complexity of Comparing Hidden Markov Models}}.
\newblock {\sl \bibinfo{journal}{Journal of Computing and System Science}}
  \bibinfo{volume}{65}(\bibinfo{number}{3}), pp. \bibinfo{pages}{545--569}.

\bibitemdeclare{inproceedings}{lyng99}
\bibitem{lyng99}
\bibinfo{author}{R.~B. \surnamestart Lyngs{\o}\surnameend},
  \bibinfo{author}{C.~N.~S. \surnamestart Pedersen\surnameend} \&
  \bibinfo{author}{H.~\surnamestart Nielsen\surnameend} (\bibinfo{year}{1999}):
  \emph{\bibinfo{title}{Metrics and similarity measures for hidden {M}arkov
  models}}.
\newblock In: {\sl \bibinfo{booktitle}{Proceedings of \textsc{Ismb} '99}}, pp.
  \bibinfo{pages}{178--186}.

\bibitemdeclare{inproceedings}{murg04}
\bibitem{murg04}
\bibinfo{author}{T.~\surnamestart Murgue\surnameend} \&
  \bibinfo{author}{C.~\surnamestart de~la Higuera\surnameend}
  (\bibinfo{year}{2004}): \emph{\bibinfo{title}{Distances between
  Distributions: Comparing Language Models}}.
\newblock In \bibinfo{editor}{A.~\surnamestart Fred\surnameend},
  \bibinfo{editor}{T.~\surnamestart Caelli\surnameend},
  \bibinfo{editor}{R.~\surnamestart Duin\surnameend},
  \bibinfo{editor}{A.~\surnamestart Campilho\surnameend} \&
  \bibinfo{editor}{D.~\surnamestart de~Ridder\surnameend}, editors: {\sl
  \bibinfo{booktitle}{Structural, Syntactic and Statistical Pattern
  Recognition, Proceedings of \textsc{Sspr} and \textsc{Spr} 2004}}, {\sl
  \bibinfo{series}{\textsc{Lncs}}} \bibinfo{volume}{3138},
  \bibinfo{publisher}{Springer-Verlag}, pp. \bibinfo{pages}{269--277}.

\bibitemdeclare{inproceedings}{nede04}
\bibitem{nede04}
\bibinfo{author}{M-J. \surnamestart Nederhof\surnameend} \&
  \bibinfo{author}{G.~\surnamestart Satta\surnameend} (\bibinfo{year}{2004}):
  \emph{\bibinfo{title}{Kullback-Leibler distance between probabilistic
  context-free grammars and probabilistic finite automata}}.
\newblock In: {\sl \bibinfo{booktitle}{Proceedings of \textsc{Coling} '04
  Proceedings of the 20th international conference on Computational
  Linguistics}}, \bibinfo{volume}{71}.

\bibitemdeclare{article}{post46}
\bibitem{post46}
\bibinfo{author}{E.~L. \surnamestart Post\surnameend} (\bibinfo{year}{1946}):
  \emph{\bibinfo{title}{A variant of a recursively unsolvable problem}}.
\newblock {\sl \bibinfo{journal}{Bulletin of the American Mathematical
  Society}} \bibinfo{volume}{52}(\bibinfo{number}{4}), pp.
  \bibinfo{pages}{264--268}.

\bibitemdeclare{article}{saka94}
\bibitem{saka94}
\bibinfo{author}{Y.~\surnamestart Sakakibara\surnameend},
  \bibinfo{author}{M.~\surnamestart Brown\surnameend},
  \bibinfo{author}{R.~\surnamestart Hughley\surnameend},
  \bibinfo{author}{I.~\surnamestart Mian\surnameend},
  \bibinfo{author}{K.~\surnamestart Sjolander\surnameend},
  \bibinfo{author}{R.~\surnamestart Underwood\surnameend} \&
  \bibinfo{author}{D.~\surnamestart Haussler\surnameend}
  (\bibinfo{year}{1994}): \emph{\bibinfo{title}{Stochastic context-free
  grammars for {tRNA} modeling}}.
\newblock {\sl \bibinfo{journal}{Nuclear Acids Research}} \bibinfo{volume}{22},
  pp. \bibinfo{pages}{5112--5120}.

\bibitemdeclare{article}{salv02}
\bibitem{salv02}
\bibinfo{author}{I.~\surnamestart Salvador\surnameend} \& \bibinfo{author}{J-M.
  \surnamestart Bened\'i\surnameend} (\bibinfo{year}{2002}):
  \emph{\bibinfo{title}{\textsc{Rna} Modeling by Combining Stochastic
  Context-Free Grammars and $n$-Gram Models}}.
\newblock {\sl \bibinfo{journal}{International Journal of Pattern Recognition
  and Artificial Intelligence}} \bibinfo{volume}{16}(\bibinfo{number}{3}), pp.
  \bibinfo{pages}{309--316}.

\bibitemdeclare{article}{stol95}
\bibitem{stol95}
\bibinfo{author}{A.~\surnamestart Stolcke\surnameend} (\bibinfo{year}{1995}):
  \emph{\bibinfo{title}{An Efficient probablistic Context-Free Parsing
  Algorithm that Computes Prefix Probabilities}}.
\newblock {\sl \bibinfo{journal}{Computational Linguistics}}
  \bibinfo{volume}{21}(\bibinfo{number}{2}), pp. \bibinfo{pages}{165--201}.

\end{thebibliography}
\end{document}